\newtheorem{theorem}{Theorem}[section]
\newtheorem{lemma}[theorem]{Lemma}
\newtheorem{definition}[theorem]{Definition}
\newtheorem{fact}[theorem]{Fact}
\newenvironment{proofof}[1]{\begin{proof}[Proof of #1]}{\end{proof}}
\newenvironment{proofsketch}{\begin{proof}[Proof Sketch]}{\end{proof}}
\newtheorem{prop}[theorem]{Proposition}
\newcommand{\set}[1]{\left\{ #1 \right\}}
\newcommand{\union}{\cup}
\newcommand{\intersect}{\cap}
\newcommand{\sm}{\setminus}
\def\min{\qopname\relax n{min}}
\def\max{\qopname\relax n{max}}
\def\argmin{\qopname\relax n{argmin}}
\def\argmax{\qopname\relax n{argmax}}
\def\Pr{\qopname\relax n{\mathbf{Pr}}}
\def\Ex{\qopname\relax n{\mathbf{E}}}
\newcommand{\RR}{\mathbb{R}}
\newcommand{\ZZ}{\mathbb{Z}}
\newcommand{\QQ}{\mathbb{Q}}
\def\A{\mathcal{A}}
\def\L{\mathcal{L}}
\def\sse{\subseteq}
\newcommand{\grad}{\bigtriangledown}
\renewcommand{\comment}[1]{}
\title{Submodular Functions: Extensions, Distributions, and
  Algorithms\\ \vspace{0.1in} A Survey\footnote{This is a slightly revised version of the author's PhD Qualifying Exam Report, the original version of which was submitted to the Department of Computer Science at Stanford University in December of 2009. The Qualifying exam committee consisted of Serge Plotkin, Tim Roughgarden (Thesis Advisor) and Jan Vondr{\'a}k.}} \author{ Shaddin Dughmi \vspace{0.2in}\\
}
\begin{document}

\maketitle

\section{Introduction}
Submodularity is a fundamental phenomenon in combinatorial
optimization. Submodular functions occur in a variety of combinatorial
settings such as coverage problems, cut problems, welfare
maximization, and many more. Therefore, a lot of work has been
concerned with maximizing or minimizing a submodular function, often
subject to combinatorial constraints. Many of these algorithmic
results exhibit a common structure. Namely, the function is
\emph{extended} to a continuous, usually non-linear, function on a
convex domain. Then, this relaxation is solved, and the fractional
solution rounded to yield an integral solution. Often, the continuous
\emph{extension} has a natural interpretation in terms of
distributions on the ground set. This interpretation is often crucial
to the results and their analysis. The purpose of this survey is to
highlight this connection between extensions, distributions,
relaxations, and optimization in the context of submodular functions.

\paragraph{Contributions} 
The purpose of this survey is to present a common framework for viewing
many of the results on optimizing submodular functions. Therefore,
most of the results mentioned -- with the exception of those in
Section \ref{sec:minimizing_cardinality} -- are either already
published, folklore, or easily gotten by existing techniques. In the
first case, citations are provided. Nevertheless, for most of these
results we present alternate, hopefully simplified statements and
proofs that present a more unified picture. In Section
\ref{sec:minimizing_cardinality}, we present a new result for
minimizing symmetric submodular functions subject to a cardinality
constraint.

\section{Preliminaries}
\subsection{Submodular Functions}

We begin with some definitions. We consider a \emph{ground set} $X$ with
$|X|=n$. A \emph{set function} on $X$ is a function $f:2^X \to
\RR$. 

\begin{definition}\label{def:submod1}
A set function $f:2^X \to \RR$ is \emph{submodular} if, for all $A,B
\sse X$
\begin{align*}
  f(A \intersect B) + f(A \union B) \leq f(A) + f(B) 
\end{align*}
\end{definition}

Equivalently, a submodular function can be defined as set function
exhibiting \emph{diminishing margial returns}.

\begin{definition}
  \label{def:submod2}
A set function $f:2^X \to \RR$ is \emph{submodular} if, for all $A,B
\sse X$ with $A \sse B$, and for all $j \in X \sm B$, 
\begin{align*}
  f(A \union \set{j}) - f(A) \geq f(B \union \set{j}) - f(B)
\end{align*}
\end{definition}

The fact that the first definition implies the second can be easily
checked by a simple algebraic manipulation. The other direction can be
shown by a simple induction on $|A \union B| - |A \intersect B|$.

We distinguish additional properties of set functions that will prove
useful. We say $f:2^X \to \RR$ is \emph{nonnegative} if $f(S) \geq 0$
for all $S \sse X$. $f$ is \emph{normalized} if $f(\emptyset) =0$. $f$
is \emph{monotone} if $f(S) \leq f(T)$ whenever $S \sse T$.  Moreover,
$f$ is \emph{symmetric} if $f(S) = f(X \sm S)$ for all $S \sse X$.

Algorithmic results on optimizing submodular functions can often be
stated in the general \emph{value oracle} model. This encapsulates
most special cases of these functions that arise in practice. In the
value oracle model, access to $f$ is via \emph{value queries}: the
algorithm may query for the value of $f(S)$ for any $S$.

We conclude with some more concrete examples of submodular functions
that arise in practice. We say $f$ is a \emph{coverage function} when
elements of $X$ are sets over some other ground set $Y$, and $f(S) =|
\union_{ U \in S} U|$. Therefore, problems such as max-k-cover problem
can be thought of as maximizing a coverage function subject to a
cardinality constraint of $k$. Another class of submodular functions
is cut functions. A set function $f:2^V \to \ZZ$ is a \emph{cut
  function} of a graph $G=(V,E)$ if $f(U)$ is the number of edges of $G$
crossing the cut $(U, V \sm U)$. This can be generalized to
hypergraphs. Moreover, weighted versions of both coverage functions
and cut functions are also submodular. There are many other examples
of submodular functions, for which we refer the reader to the thorough
treatment in \cite{lovasz_submodularity}.

\subsection{Polytopes and Integrality Gaps }

A set $P \sse \RR^n$ is a \emph{polytope} if it is the convex hull of
a finite number of points, known as the \emph{vertices} of $P$, in
$\RR^n$. Equivalently, $P \sse \RR^n$ is a polytope if an only if it
is the intersection of a finite number of halfspaces in
$\RR^n$. Polytopes are convex sets, and are central objects in
combinatorial optimization.

We will consider optimizing continuous functions over polytopes. In the
context of maximization problems, we say a function $F : D \to \RR$
with $P \sse D \sse \RR^n$ has \emph{integrality gap}  $\alpha$
relative to polytope $P$ if \[ \frac{\max\set{F(x) : x \in P}}{\max\set{F(x):
    x \in P, x \in \ZZ^n}} = \alpha \]
If $F$ has integrality gap $1$ relative to $P$, we say it has no
integrality gap relative to $P$.

\subsection{Matroids}
\label{sec:matroids}
A \emph{Set System} is a pair $(X,I)$, where $X$ is the \emph{ground
  set}, and $I$ is a family of subsets of $X$. A special class of set
systems, known as \emph{matroids}, are of particular interest. When
$M=(X,I)$ is a matroid, we refer to elements of $I$ as the
\emph{independent sets} of $M$.
\begin{definition}
  A \emph{matroid} is a set system $(X,I)$ that satisfies
  \begin{itemize}
  \item Downwards Closure: If $T \in I$ and $S \sse T$ then $S \in I$.
  \item Exchange Property: If $S,T \in I$, and $|T| > |S|$, then there
    is $y \in T \sm S$ such that $S \union \set{y} \in I$.
  \end{itemize}
\end{definition}

Given a matroid $M=(X,I)$, we define the \emph{matroid polytope}
$P(M) \sse [0,1]^X$ as the convex hull of the indicator vectors of the
independent sets of $M$.
\[P(M) = hull\left( \set{\vec{\bf 1}_S: S \in I}\right)\]

Edmonds \cite{edmonds_matroids} showed that an equivalent characterization of $P(M)$ can
be given in terms of the \emph{rank function} of the matroid. The rank
function $r_M: 2^X \to \ZZ$ of matroid $M$ is the integer-valued submodular
function defined by \[ r_M(S) = \max\set{|T| : T \in I, T \sse S} \]
Using the rank function, the matroid polytope can be equivalently
characterized as follows. For a vector $x \in \RR^X$ and $S \sse X$,
we use $x(S)$ to denote $\sum_{i \in S} x_i$.

\[ P(M) = \set{x \in \RR+^n : x(S) \leq r_M(S) \mbox{ for all $S \sse X$} } \]
We note that the vertices of the matroid polytopes are all
integers, by the first definition.

\section{Extensions and Distributions}
\label{sec:extensions}
An \emph{extension} of a set function $f: 2^X \to \RR$ is some
function from the hypercube $[0,1]^X$ to $\RR$ that agrees with $f$ on
the vertices of the hypercube. We survey various extensions of
submodular functions, and connect them to distributions on subsets of
the ground set.

\subsection{The Convex Closure, Lov\'{a}sz Extension, and Chain
  Distributions}
\label{sec:convc_lovasz}
In this section, we will define the convex closure of any set
function, and reduce minimization of the set function to minimization
of its convex closure. Then, we will show that, for submodular
functions, the convex closure has a simple form that can be evaluated
efficiently at any point, and thus minimized efficiently.

\subsubsection{The Convex Closure}
\label{sec:convc}
For any set function $f: 2^X \to \RR$, be it submodular or not, we can
define its \emph{convex closure} $f^-$. Intuitively, $f^-$ can be constructed from $f$ by first plotting $f$ in $\RR^{|X| +1}$, and then placing a ``blanket'' under the resulting graph and pulling up until the blanket is taut. Formally, $f^-$ can be defined as follows.
\begin{definition}\label{def:convc1}
  For a set function $f: 2^X \to \RR$, the \emph{convex closure} $f^-:
  [0,1]^X \to \RR$ is the point-wise highest convex function from
  $[0,1]^X$ to $\RR$ that always lowerbounds $f$. 
\end{definition}

It remains to show that convex closure exists and is
well-defined. Observe that the maximum of any number (even infinite)
of convex functions is again a convex function. Moreover, the maximum
of any number (even infinite) of functions lowerbounding $f$ is also a
function lowerbounding $f$. This establishes existence and uniqueness
of the convex closure, as needed. We can equivalently define the convex closure in
terms of distributions on subsets of $X$.

\begin{definition}\label{def:convc2}
  Fix a set function $f: 2^X \to \RR$. For every $x \in [0,1]^X$, let
  $D^-_f(x)$ denote a distribution over $2^X$, with marginals $x$,
  minimizing $\Ex_{S \sim D^-_f(x)}[ f(S) ]$  (breaking ties arbitrarily).
  The \emph{Convex Closure} $f^-$ can be defined as follows: $f^-(x)$
  is the expected value of $f(S)$ over draws $S$ from $D^-_f(x)$.
\end{definition}

To see that the two definitions are equivalent, let us use $f_1^-$ and
$f_2^-$ to denote the convex closure as defined in Definitions
\ref{def:convc1}  and \ref{def:convc2} respectively. First, since the
epigraph of a convex function is a convex set, it
immediately follows that $f_1^-$ lowerbounds $f_2^-$. Moreover, it is
easy to see that $f_2^-(x)$ is the minimum of a simple linear program
with $x$ in the constraint vector; thus $f_2^-$ is convex by
elementary convex analysis. Combining these two facts, we get that
$f^-_1=f^-_2$, as needed.

Next, we mention some simple facts about the convex closure of
$f$. First, it is apparent from Definition \ref{def:convc2} that the
convex closure is indeed an extension. Namely, it agrees with
$f$ on all the integer points. Moreover, it also follows from
Definition \ref{def:convc2} that $f^-$ only takes on values that
correspond to distributions on $2^X$, and thus the minimum of $f^-$ is
attained at an integer point. This gives the following useful
connection between the discrete function and its extension.

\begin{prop}\label{discrete_to_convex}
  The minimum values of $f$ and $f^-$ are equal. If $S$ is a
  minimizer of $f(S)$, then $\vec{\bf 1}_S$ is a minimizer
  of $f^-$. Moreover, if $x$ is a minimizer of $f^-$, then every set in
  the support of $D^-_f(x)$ is a minimizer of $f$.
\end{prop}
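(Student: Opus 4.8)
The plan is to verify the three assertions directly from Definition~\ref{def:convc2}, using only the facts already established about $f^-$. For the first claim, I would prove two inequalities. Since any integer point $\vec{\bf 1}_S$ has the point-mass distribution on $S$ as a feasible distribution with marginals $\vec{\bf 1}_S$, we get $f^-(\vec{\bf 1}_S) \leq f(S)$; combined with the fact (already noted) that $f^-$ is an extension, $f^-(\vec{\bf 1}_S) = f(S)$. Hence $\min_x f^-(x) \leq \min_S f(S)$. Conversely, for any $x \in [0,1]^X$, the value $f^-(x) = \Ex_{S \sim D^-_f(x)}[f(S)]$ is an average of values of $f$, so $f^-(x) \geq \min_S f(S)$; taking the min over $x$ gives $\min_x f^-(x) \geq \min_S f(S)$. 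Together these yield equality of the minimum values.

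For the second claim, if $S$ minimizes $f$, then $f^-(\vec{\bf 1}_S) = f(S) = \min_{S'} f(S') = \min_x f^-(x)$ by the first claim, so $\vec{\bf 1}_S$ is a minimizer of $f^-$. For the third claim, suppose $x$ minimizes $f^-$, so $f^-(x) = \min_S f(S)$. Since $f^-(x) = \Ex_{S \sim D^-_f(x)}[f(S)]$ is a convex combination (or integral) of values $f(S)$ over sets $S$ in the support of $D^-_f(x)$, and each such value satisfies $f(S) \geq \min_{S'} f(S') = f^-(x)$, the only way the average can equal the minimum is if $f(S) = \min_{S'} f(S')$ for every $S$ in the support. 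Thus every set in the support of $D^-_f(x)$ is a minimizer of $f$.

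I do not expect any serious obstacle here; the proof is essentially bookkeeping with the observation that an average equal to its own infimum forces all contributing terms to equal that infimum. The only point requiring mild care is making sure the distribution $D^-_f(x)$ is well-defined and that the expectation is genuinely a finite average (or a well-behaved integral) over subsets of $X$ — but since $2^X$ is finite, $D^-_f(x)$ is just a probability vector over finitely many sets, and the LP in Definition~\ref{def:convc2} attains its minimum, so $D^-_f(x)$ exists. Everything then follows from elementary manipulations.
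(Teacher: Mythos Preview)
Your proposal is correct and follows essentially the same approach as the paper. In fact, the paper does not give a formal proof of this proposition at all; it simply states in the preceding paragraph that $f^-$ is an extension and that every value $f^-(x)$ corresponds to an expectation of $f$ over a distribution on $2^X$, from which the proposition follows---your argument makes exactly these observations explicit and supplies the routine bookkeeping (in particular the ``average equals minimum forces all terms to equal the minimum'' step for the third claim).
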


\subsubsection{The Lov\'{a}sz Extension and Chain Distributions}
\label{sec:lovasz_ext}
In this section, we will describe an extension $\L_f: [0,1]^X \to
\RR$, defined by Lov\'{a}sz in \cite{lovasz_submodularity}, of an
arbitrary set function $f: 2^X \to \RR$. In the next section we will
show that, when $f$ is submodular, $\L_f= f^-$. We define $\L_f$ as
follows.

\begin{definition}(\cite{lovasz_submodularity})\label{def:lovasz_ext}
  Fix $x \in [0,1]^X$, and let $X=\set{v_1,v_2,\ldots,v_n}$ such that
  $x(v_1) \geq x(v_2) \geq \ldots \geq x(v_n)$. For $0 \leq i \leq n$,
  let $S_i=\set{v_1,\ldots,v_i}$. Let $\set{\lambda_i}_{i=0}^n$ be the
  unique coefficients with $\lambda_i \geq 0$ and $\sum_i \lambda_i
  =1$ such that: \[x= \sum_{i=0}^n \lambda_i 1_{S_i}\] It is easy to
  see that $\lambda_n= x(v_n)$, and for $0\leq i<n$ we have $\lambda_i =
  x(v_i) - x(v_{i+1})$, and $\lambda_0 = 1- x(v_1)$.  The value of the
  \emph{Lov\'{a}sz extension} of $f$ at $x$ is defined as \[ \L_f(x) = \sum_i
  \lambda_i f(S_i) \]
\end{definition}

We can interpret the Lov\'{a}sz Extension as follows. Given a set of
marginal probabilities $x \in [0,1]^X$ on elements of $X$, we
construct a particular distribution $D^\L(x)$ on $2^X$ satisfying
these marginals. Intuitively, this distribution puts as much
probability mass on the large subsets of $X$, subject to obeying the
marginals. Therefore, the largest possible set $X=S_n$ gets as much
probability mass as possible subject to the smallest marginal
$x(v_n)$. When the marginal probability $x(v_n)$ of $v_n$ has been
``saturated'', we put as much mass as possible on the next largest set
$S_{n-1}$. It is easy to see that the next element saturated is
$v_{n-1}$, after we place $x(v_{n-1}) - x(v_n)$ probability mass on
$S_{n-1}$. And so on and so forth. Now, it is easy to see that
$\L_f(x)$ is simply the expected value of $f$ on draws from the
distribution $D^\L(x)$.

A note on the distributions $D^\L(*)$ defining the Lov\'{a}sz
extension. Notice, that the definition $D^\L(x)$ is \emph{oblivious},
in that it does not depend on the particular function $f$. Moreover,
notice that the support of $D^\L(x)$ is a \emph{chain}: a nested
family of sets. We call such a distribution a \emph{chain
  distribution} on $2^X$. The following easy fact will be useful
later.

\begin{fact}\label{fact:chain_is_unique}
  The distribution $D^\L(x)$ is the unique chain distribution on $2^X$
  with marginals $x$.
\end{fact}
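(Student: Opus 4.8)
The plan is to split the statement into two halves. First, $D^\L(x)$ really is a chain distribution with marginals $x$: its support $\{S_0,\ldots,S_n\}$ is a chain by construction, and the marginal it assigns to $v_\ell$ is $\sum_{i\ge \ell}\lambda_i=\sum_{i\ge \ell}\big(x(v_i)-x(v_{i+1})\big)=x(v_\ell)$ by telescoping (using the convention $x(v_{n+1}):=0$). This part is immediate from Definition~\ref{def:lovasz_ext}, so the real content of the Fact is uniqueness, which is what I would concentrate on.

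For uniqueness, I would take an arbitrary chain distribution $D$ with marginals $x$, discard its probability-zero atoms, and write what remains as $\emptyset$ (optionally, carrying mass $q\ge 0$) together with a strictly increasing chain of nonempty sets $T_1\subsetneq\cdots\subsetneq T_k$ carrying masses $p_1,\ldots,p_k>0$, where $q+\sum_j p_j=1$. (The degenerate case $x=0$, which forces $D$ to be the point mass on $\emptyset$, is disposed of separately.) The crux is a ``suffix-sum'' identity forced by nestedness: for $v$ with $x_v>0$, letting $\tau(v)$ be the least $j$ with $v\in T_j$, nestedness gives $v\in T_j$ for exactly the indices $j\ge\tau(v)$, so $x_v=\sum_{j\ge\tau(v)}p_j=:c_{\tau(v)}$. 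Since $c_1>c_2>\cdots>c_k>0$ is strictly decreasing and every layer $T_j\sm T_{j-1}$ is nonempty, each $c_j$ is realized as a marginal; hence $v\mapsto\tau(v)$ and $v\mapsto x_v$ determine each other, each $T_j=\{v:x_v\ge c_j\}$ is a superlevel set of $x$, and the distinct positive coordinate values of $x$ are exactly $c_1>\cdots>c_k$.

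From here $D$ can simply be read off from $x$: $k$ is the number of distinct positive coordinate values of $x$, $c_j$ is the $j$-th largest of them, $T_j=\{v:x_v\ge c_j\}$, $p_j=c_j-c_{j+1}$ for $1\le j\le k$ (with $c_{k+1}:=0$), and $q=1-c_1$ (which is $0$ exactly when $c_1=1$). So every chain distribution with marginals $x$ equals this one, and since $D^\L(x)$ is a chain distribution with marginals $x$, it must be it. If desired one can also verify the agreement with $D^\L(x)$ head-on: the nonzero atoms of the distribution just described are exactly the prefixes $S_i$ with $x(v_i)>x(v_{i+1})$, each carrying mass $x(v_i)-x(v_{i+1})=\lambda_i$.

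I expect the one genuinely load-bearing step to be the suffix-sum identity $x_v=c_{\tau(v)}$, since that is the only place nestedness of the support is used; after it, the proof is just inversion of a strictly monotone correspondence. The remaining issues are bookkeeping: ties in $x$ are harmless (equal marginals force equal $\tau$, so tied elements sit in exactly the same chain sets and are ``saturated together'', independently of how Definition~\ref{def:lovasz_ext} breaks ties in the sort), zero coordinates of $x$ correspond to elements lying in no $T_j$, and whether $\emptyset$ is an atom of $D$ is recorded entirely by whether $c_1<1$.
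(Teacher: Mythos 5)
Your proof is correct. The paper states this Fact without proof (it is offered as an ``easy fact''), and your argument --- showing that nestedness forces each marginal $x_v$ to be a suffix sum of the atom masses, so that the chain sets must be the superlevel sets $\{v : x_v \ge c_j\}$ with masses given by consecutive differences of the distinct positive values of $x$ --- is exactly the natural way to make precise the ``saturation'' description of $D^\L(x)$ given in Section~\ref{sec:lovasz_ext}, including the correct handling of ties and of the mass on $\emptyset$.
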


\subsubsection{Equivalence of Lov\'{a}sz Extension and Convex Closure}
We will now show that, for a submodular function $f$, the Lov\'{a}sz extension
and the convex closure are one and the same. This is good news, since
we can evaluate the Lov\'{a}sz Extension efficiently at any $x \in
[0,1]^X$, and moreover we can explicitly construct a distribution with
marginals $x$ attaining the value of the Lov\'{a}sz Extension at $x$. This
has implications for minimization of submodular functions, as we will
show in Section \ref{sec:minimizing}.

The intuition behind this equivalence is quite simple.  Recall that,
from Definition \ref{def:convc2}, the value $f^-(x)$ is simply the
minimum possible expected value of $f$ over a distribution on $2^X$
with marginals $x$.  Fixing $f$ and $x$, we ask the question: what
could a distribution $D^-_f(x)$ attaining this minimum look like?
Submodularity of $f$ implies that $f$ exhibits diminishing marginal
returns. Therefore, subject to the marginals $x$, the value of $f$ is
smallest for distributions that ``pack'' as many elements together as
possible in expectation. By definition, that is roughly what $D^\L(x)$
is doing: it packs as many elements together subject to the smallest
marginal, then packs as many unsaturated elements together until the
next marginal is saturated, etc.

While the above intuition is helpful, the proof is made precise by
cleaner \emph{uncrossing} arguments. To illustrate a simple uncrossing
argument, consider two sets $A,B \in 2^X$ that are \emph{crossing}:
neither $A \sse B$ nor $B \sse A$. Now, consider a simple distribution
$D$ that outputs each of $A$ and $B$ with probability $1/2$. Now,
consider \emph{uncrossing} $D$ to form the distribution $D'$, which
outputs each of $A \intersect B$ and $A \union B$ with probability
$1/2$. Observe that $D$ and $D'$ have the same marginals, yet by
direct application of Definition \ref{def:submod1} we conclude that
\[\Ex_{S \sim D'} f(S) = \frac{1}{2} \left( f(A \intersect B) + f(A
  \union B) \right)\leq \frac{1}{2}
\left( f(A) + f(B)\right) \leq \Ex_{S \sim D} f(S) \]

Therefore, starting with any distribution, we can keep uncrossing it
without changing the marginals or increasing the expected value of
$f$. To conclude that this process terminates with a chain
distribution, we need a notion of progress.  We make this precise in
the following Lemma and subsequent Theorem.

\begin{lemma}\label{lem:uncrossing}
  Fix a submodular function $f:2^X \to \RR$. Let $D$ be an arbitrary
  distribution on $2^X$ with marginals $x$. If $D$ is not a chain
  distribution, then there exists another distribution $D'$ with
  marginals $x$ and $\Ex_{S \sim D'} f(S) \leq \Ex_{S \sim D} f(S)$, such
  that $\Ex_{S \sim D'} |S|^2 > \Ex_{S \sim D} |S|^2$.
\end{lemma}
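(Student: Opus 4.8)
The plan is to take the uncrossing operation described informally just before the lemma and turn it into a single step that simultaneously (i) does not increase $\Ex f$ and (ii) strictly increases $\Ex |S|^2$. First I would observe that since $D$ is not a chain distribution, its support must contain two crossing sets $A$ and $B$ (neither contained in the other); otherwise the support would be a nested family, i.e.\ a chain. Let $p = \min\set{\Pr_D[A], \Pr_D[B]} > 0$. I would then define $D'$ by moving $p$ units of probability mass off of each of $A$ and $B$ and redistributing that $2p$ mass onto $A \intersect B$ and $A \union B$, putting $p$ on each. Formally, $D'(S) = D(S)$ for $S \notin \set{A, B, A\intersect B, A\union B}$, $D'(A) = D(A) - p$, $D'(B) = D(B) - p$, $D'(A \intersect B) = D(A\intersect B) + p$, and $D'(A\union B) = D(A\union B)+p$; one of $D'(A), D'(B)$ is now zero, so $A$ or $B$ leaves the support. (A minor care: if $A\intersect B$ or $A\union B$ coincides with $A$ or $B$ this cannot happen since $A,B$ cross, but they may coincide with each other only if $A=B$, which is also excluded.)

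Next I would verify the three required properties. For the marginals: for each element $j$, the indicator contribution $\one[j\in A\intersect B] + \one[j \in A\union B] = \one[j\in A] + \one[j\in B]$, so shifting mass in the way above leaves $\Pr_{D'}[j \in S] = \Pr_D[j\in S] = x_j$ for all $j$; hence $D'$ has marginals $x$. For the objective: $\Ex_{S\sim D'} f(S) - \Ex_{S \sim D} f(S) = p\,\bigl(f(A\intersect B) + f(A\union B) - f(A) - f(B)\bigr) \leq 0$ by Definition~\ref{def:submod1} (submodularity). For the potential: using the same additivity of indicators,
\[
  \Ex_{S\sim D'}|S|^2 - \Ex_{S\sim D}|S|^2 = p\bigl(|A\intersect B|^2 + |A\union B|^2 - |A|^2 - |B|^2\bigr).
\]
So it remains to show $|A\intersect B|^2 + |A \union B|^2 > |A|^2 + |B|^2$ whenever $A,B$ cross. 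Writing $a = |A \sm B|$, $b = |B\sm A|$, $c = |A\intersect B|$, we have $|A| = a+c$, $|B| = b+c$, $|A\union B| = a+b+c$, $|A\intersect B| = c$, and the difference works out to $c^2 + (a+b+c)^2 - (a+c)^2 - (b+c)^2 = 2ab$. Since $A$ and $B$ cross, both $a = |A\sm B| \geq 1$ and $b = |B \sm A| \geq 1$, so $2ab \geq 2 > 0$, giving the strict inequality. Multiplying by $p > 0$ finishes the potential bound.

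The substantive content here is really just the elementary identity $|A\intersect B|^2 + |A\union B|^2 - |A|^2 - |B|^2 = 2|A\sm B|\cdot|B\sm A|$, which is what makes $\Ex|S|^2$ a strict progress measure under uncrossing — it is the one place where crossing (as opposed to merely $A\neq B$) is used. The only mild obstacle is bookkeeping: making sure the mass shift is well-defined (the four sets $A, B, A\intersect B, A\union B$ are genuinely distinct when $A,B$ cross, and $p$ is chosen small enough that no probability goes negative), and checking the indicator-additivity identity carefully so that both the marginal-preservation and the two quadratic-difference computations go through cleanly. None of this is deep; the lemma then feeds into the subsequent theorem, where iterating the step and using boundedness of $\Ex|S|^2$ (it lies in $[0,n^2]$) together with a suitable finiteness/limiting argument forces termination at a chain distribution.
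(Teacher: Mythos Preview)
Your proposal is correct and follows essentially the same uncrossing argument as the paper: pick crossing $A,B$ in the support, shift $p=\min\{\Pr_D[A],\Pr_D[B]\}$ mass from $A,B$ to $A\cap B, A\cup B$, and check that marginals are preserved, $\Ex f$ does not increase by submodularity, and $\Ex|S|^2$ strictly increases. Your explicit computation $|A\cap B|^2+|A\cup B|^2-|A|^2-|B|^2=2\,|A\sm B|\cdot|B\sm A|>0$ is exactly what the paper summarizes as ``inclusion-exclusion and strict convexity of squaring,'' and your bookkeeping (distinctness of the four sets, nonnegativity of the shifted masses) is if anything more careful than the paper's.
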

In other words, any non-chain distribution $D$ can be uncrossed to
form a distribution $D'$ that is no worse, and is closer to being a
chain distribution. The quantity $\Ex |S|^2$ is simply a potential
function that measures progress towards a chain distribution; other
choices of potential function work equally well.
\begin{proofof}{Lemma \ref{lem:uncrossing}}
  Fix $f$, $D$ and $x$ as in the statement of the Lemma. Assume $D$ is
  not a chain distribution. Therefore, there exist two sets $A,B \sse
  X$ in the support of $D$ (i.e. $\Pr_D[A],\Pr_D[B] >0$) that are
  \emph{crossing}: neither $A \sse B$ nor $B \sse A$. Assume without
  loss of generality that $\Pr_D[B] \geq \Pr_D[A]$.  We define a new
  distribution $D'$ that simply replaces draws of $A$ and $B$ with
  draws of $A \intersect B$, $B$, and $A \union B$, as follows.
  \[ \Pr_{D'}(S) = \Pr_D[S] \mbox{ for $S \notin \set{A,B,A\intersect
      B, A \union B}$}\]
  \[ \Pr_{D'}(A \intersect B) = \Pr_D[A \intersect B] +\Pr_D[A] \]
  \[ \Pr_{D'}(A \union B) = \Pr_D][A \union B] + \Pr_D[A] \]
  \[ \Pr_{D'}(B) = \Pr_D[B] - \Pr_D[A] \]
  \[ \Pr_{D'}(A) = 0 \]
  
  Notice that distribution $D'$ simply pairs up draws of $A$ and $B$
  from $D$, and replaces each such pair with a draw of $A
  \intersect B$ and a draw of $\A \union B$. It is easy to check that
  this does not change the marginals $x$. Moreover, this allows us to
  conclude that the difference in the expected value of $f$
  is given by:
  \[ \Ex_{S \sim D'}f(S)-\Ex_{S \sim D} f(S) = [\Pr_D[A] f(A
  \intersect B) + \Pr_D[A] f(A \union B)] - [\Pr_D[A] f(A) + \Pr_D[A]
  f(B)] \] Directly applying Definition \ref{def:submod1}, we conclude
  that this quantity is at most $0$. As for the change in the
  potential function $\Ex[|S|^2]$, we get
  \begin{align*}
    \Ex_{S \sim D'} |S|^2 - \Ex_{S \sim D} |S|^2 &= \left(\Pr_D[A] \cdot |A \union
    B|^2 + \Pr_D[A] \cdot | A \intersect B|^2 \right) - \left(
    \Pr_D[A] \cdot |A|^2 + \Pr_D[A]
    \cdot |B|^2 \right) \\
    &= \Pr_D[A] \left( |A \union B|^2 + |A \intersect B|^2 - |A|^2 -
    |B|^2 \right) > 0
\end{align*}
Where the last inequality follows from the inclusion-exclusion
equation and the strict convexity of the squaring function.
\end{proofof}

Now that we know we can ``uncross'' any non-chain distribution without
increasing the expectation of $f$ or changing the marginals, we get
the Theorem.

\begin{theorem} \label{thm:lovasz_is_convex}
  Fix a submodular function $f: 2^X \to \RR$. For any $x \in [0,1]^X$,
  we can take $D^-_f(x) = D^\L(x)$ (without loss), and therefore $f^-(x) =
  \L_f(x)$. Thus $f^- = L_f$. 
\end{theorem}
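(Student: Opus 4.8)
The plan is to combine the uncrossing Lemma \ref{lem:uncrossing} with a compactness argument to exhibit an optimal distribution (in the sense of Definition \ref{def:convc2}) that is a chain distribution, and then invoke Fact \ref{fact:chain_is_unique} to identify it as $D^\L(x)$.

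First I would fix $x \in [0,1]^X$ and consider the set $\Delta_x$ of all probability distributions on $2^X$ with marginals $x$. This is a nonempty (it contains $D^\L(x)$, whose marginals are $x$ by Definition \ref{def:lovasz_ext}), closed, bounded subset of $\RR^{2^X}$, and the map $D \mapsto \Ex_{S \sim D} f(S)$ is linear on it. Hence this map attains its minimum on $\Delta_x$, and the set $\Delta_x^\ast \sse \Delta_x$ of minimizers is again closed and bounded. The functional $D \mapsto \Ex_{S \sim D} |S|^2$ is also linear, so it attains its maximum over $\Delta_x^\ast$, say at a distribution $D^\ast$.

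Next I would argue that $D^\ast$ is a chain distribution. Suppose it is not. By Lemma \ref{lem:uncrossing} there is a distribution $D'$ with marginals $x$ satisfying $\Ex_{S \sim D'} f(S) \le \Ex_{S \sim D^\ast} f(S)$ and $\Ex_{S \sim D'} |S|^2 > \Ex_{S \sim D^\ast} |S|^2$. Since $D^\ast$ is a minimizer, the first inequality forces $D' \in \Delta_x^\ast$ as well; but then the second inequality contradicts the choice of $D^\ast$ as a maximizer of $\Ex_{S \sim D} |S|^2$ over $\Delta_x^\ast$. Hence $D^\ast$ is a chain distribution with marginals $x$, so by Fact \ref{fact:chain_is_unique} we have $D^\ast = D^\L(x)$.

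Finally, since $D^\L(x) = D^\ast \in \Delta_x^\ast$ minimizes $\Ex_{S \sim D} f(S)$ over distributions with marginals $x$, we may take $D^-_f(x) = D^\L(x)$ in Definition \ref{def:convc2} (resolving the tie-breaking in its favor). Then $f^-(x) = \Ex_{S \sim D^\L(x)} f(S) = \sum_i \lambda_i f(S_i) = \L_f(x)$, with $\lambda_i$, $S_i$ as in Definition \ref{def:lovasz_ext}; as $x$ was arbitrary, $f^- = \L_f$. The one genuinely delicate point is the attainment step: the uncrossing Lemma by itself gives a strictly improving move but not termination in finitely many steps, so one cannot simply ``iterate until done''; instead one passes to a minimizing distribution that is extremal for the potential $\Ex|S|^2$, which exists by compactness, and everything else is bookkeeping.
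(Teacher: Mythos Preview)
Your proof is correct and follows essentially the same approach as the paper: pick, among the minimizing distributions with marginals $x$, one that maximizes the potential $\Ex_{S\sim D}|S|^2$ (which exists by compactness), then invoke Lemma~\ref{lem:uncrossing} to show it must be a chain distribution, and conclude via Fact~\ref{fact:chain_is_unique}. Your write-up is somewhat more explicit about the compactness step and about why iterating the uncrossing lemma would not suffice, but the argument is the same.
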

\begin{proof}
  Fix $f$ and $x$. Let $D^*$ be a choice for $D^-_f(x)$ maximizing
  $E_{S \sim D^*} |S|^2$. The maximum is attained by standard
  compactness arguments. We will show that $D^*$ is a chain
  distribution, which by Fact \ref{fact:chain_is_unique} implies that
  $D^* = D^\L(x)$, completing the proof.

  Indeed, if $D^*$ were not a chain distribution, then by Lemma
  \ref{lem:uncrossing}, there exists another choice $D'$ for
  $D^-_f(x)$ such that $\Ex_{S \sim D'} |S|^2 > \Ex_{S \sim D^*}
  |S|^2$. This contradicts the definition of $D^*$.
\end{proof}

The above Theorem implies the following remarkable
observation: The distribution minimizing a submodular function subject
to given marginals can be chosen \emph{obliviously}, since $D^\L(x)$
does not depend on the particular submodular function $f$ being
minimized. As we will see in the next section, the same does not hold
for maximization.

For completeness, we conclude with a strong converse of Theorem
\ref{thm:lovasz_is_convex}.
\begin{theorem}
  Fix a set function $f:2^X \to \RR$. If $\L_f$ is convex then $f$ is submodular.
\end{theorem}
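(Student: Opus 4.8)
The plan is to verify the inequality of Definition~\ref{def:submod1} directly, by feeding a carefully chosen point into the hypothesized convexity of $\L_f$. First I would record that $\L_f$ is genuinely an extension of $f$: at an integer point $1_S$ the ordering prescribed by Definition~\ref{def:lovasz_ext} yields the chain whose only nonzero coefficient is $\lambda_{|S|}=1$ on $S_{|S|}=S$, so $\L_f(1_S)=f(S)$.

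Next, fix arbitrary $A,B\sse X$ and set $x=\frac{1}{2}(1_A+1_B)$. Its coordinates equal $1$ on $A\intersect B$, equal $\frac{1}{2}$ on $(A\union B)\sm(A\intersect B)$, and equal $0$ off $A\union B$. The one step needing a little care is to evaluate $\L_f(x)$: order $X$ with the elements of $A\intersect B$ first, then those of $(A\union B)\sm(A\intersect B)$, then the rest (the order within each block is immaterial since $x$ is constant on it). The prefix sets $S_i$ then pass through $A\intersect B$ and $A\union B$, and the coefficients $\lambda_i=x(v_i)-x(v_{i+1})$ vanish everywhere except for a jump of $\frac12$ where the coordinate value drops from $1$ to $\frac12$ (contributing $\frac12 f(A\intersect B)$) and a jump of $\frac12$ where it drops from $\frac12$ to $0$ (contributing $\frac12 f(A\union B)$). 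The degenerate cases $A\intersect B=\emptyset$, $A\union B=X$, and $A=B$ each collapse to the same formula, using $\lambda_0=1-x(v_1)$ and $\lambda_n=x(v_n)$. Hence $\L_f(x)=\frac12 f(A\intersect B)+\frac12 f(A\union B)$; this is really just the uncrossing move of Lemma~\ref{lem:uncrossing} read in reverse.

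Finally, convexity of $\L_f$ together with the extension property gives
\[ \frac{1}{2} f(A\intersect B)+\frac{1}{2} f(A\union B) = \L_f(x) \leq \frac{1}{2}\L_f(1_A)+\frac{1}{2}\L_f(1_B) = \frac{1}{2} f(A)+\frac{1}{2} f(B), \]
which is exactly the submodularity inequality of Definition~\ref{def:submod1}; since $A,B$ were arbitrary, $f$ is submodular.

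I do not anticipate a real obstacle here. The only delicate point is the bookkeeping in evaluating $\L_f$ at the midpoint $x$ — in particular confirming that all ``intermediate'' prefix sets (those strictly between $\emptyset$, $A\intersect B$, $A\union B$, and $X$) receive zero weight, and that the boundary cases do not change the answer — and this is a routine unwinding of Definition~\ref{def:lovasz_ext}.
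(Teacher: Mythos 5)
Your proof is correct, and it takes a genuinely different route from the paper's. You argue directly: you evaluate $\L_f$ at the midpoint $x=\tfrac12(1_A+1_B)$, verify from Definition~\ref{def:lovasz_ext} that $\L_f(x)=\tfrac12 f(A\intersect B)+\tfrac12 f(A\union B)$ (your bookkeeping, including the degenerate cases via $\lambda_0$ and $\lambda_n$, checks out), and then apply convexity at the two integer points to get Definition~\ref{def:submod1} for arbitrary $A,B$. The paper instead proves the contrapositive: starting from a violation of diminishing returns (Definition~\ref{def:submod2}) at some $A$ and $i,j\notin A$, it builds the half-integral point with $x_i=x_j=1/2$, shows $\L_f$ there strictly exceeds the value of the two-point distribution on $A\union\{i\}$ and $A\union\{j\}$, hence exceeds $f^-$, and concludes $\L_f$ cannot be convex since $f^-$ is the pointwise greatest convex lower bound. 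Your version is more self-contained — it needs neither the convex closure $f^-$ nor the equivalence of the two submodularity definitions, only the definition of $\L_f$ and convexity applied to two points — whereas the paper's version fits its distributional narrative (the Lov\'{a}sz distribution "wrongly bundles" $i$ and $j$) and localizes the failure to a single pair of elements. Both are sound; yours is arguably the cleaner standalone argument.
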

\begin{proof}
  We take a non-submodular $f$, and show that $\L_f$ is non-convex. We
  will show that the Lov\'{a}sz extension makes a suboptimal choice for
  minimization at some $x\in [0,1]^X$: namely, $\L_f(x) > f^-(x)$. By
  Definition \ref{def:convc1}, $f^-(x)$ is the point-wise greatest
  convex extension of $f$. This implies that $L_f$ is non-convex.

  We now exhibit $x$ such that $\L_f(x) > f^-(x)$. By Definition
  \ref{def:submod2}, there exists a set $A \sse X$, and two elements
  $i,j \notin A$, such that
  \[ f(A \union \set{i,j}) - f(A \union \set{i}) > f(A \union \set{j})
  - f(A)\] Define $x \in [0,1]^X$ as follows: $x(k) = 1$ for each
  $k \in A$, and $x(i)=x(j)=1/2$, and $x(k)=0$ otherwise.  Now it is
  intuitively clear that the Lov\'{a}sz Extension makes the \emph{wrong}
  choice for minimization: it will attempt to bundle $i$ and $j$
  together despite  \emph{increasing} marginal returns. Indeed, By
  Definition \ref{def:lovasz_ext}, the Lov\'{a}sz extension at $x$ evaluates to
  \[ \L_f(x) = \frac{1}{2} f(A \union \set{i,j}) + \frac{1}{2} f(A) \]
  Now, consider the distribution $D$, with marginals $x$, defined by
  $\Pr_D[A \union \set{i}] = \Pr_D[A \union \set{j}] =
  \frac{1}{2}$. Since, by Definition \ref{def:convc2}, $f^-(x)$
  lowerbounds the expectation of any distribution with marginals $x$,
  we have that \[ f^-(x) \leq \frac{1}{2} f(A \union \set{i}) +
  \frac{1}{2} f(A \union \set{j})\] We can now combine the three above
  inequalities to establish $\L_f(x) > f^-(x)$, completing the proof.
  \[2 ( L_f(x) - f^-(x)) \geq f(A \union \set{i,j}) + f(A) - f(A \union
  \set{i}) - f(A \union \set{j}) >0\]
\end{proof}

\subsection{The Concave Closure}\label{sec:concc}
The \emph{concave closure} $f^+$ of any set function $f$ can be defined
analogously to the convex closure. The intuition is similar: $f^+$ can be constructed from $f$ by first plotting $f$ in $\RR^{|X| +1}$, and then placing a ``blanket'' above the resulting graph and pulling down until the blanket is taut. We again state the two equivalent formal definitions.

\begin{definition}\label{def:concc1}
  For a set function $f: 2^X \to \RR$, the \emph{concave closure} $f^+:
  [0,1]^X \to \RR$ is the point-wise lowest concave function from
  $[0,1]^X$ to $\RR$ that always upperbounds $f$. 
\end{definition}
\begin{definition}\label{def:concc2}
  Fix a set function $f: 2^X \to \RR$. For every $x \in [0,1]^X$, let
  $D^+_f(x)$ denote a distribution over $2^X$, with marginals $x$,
  maximizing $\Ex_{S \sim D^+_f(x)}[ f(S) ]$  (breaking ties arbitrarily).
  The \emph{Concave Closure} $f^+$ can be defined as follows: $f^+(x)$
  is the expected value of $f(S)$ over draws $S$ from $D^+_f(x)$.
\end{definition}

By a similar argument to that presented in Section \ref{sec:convc},
both definitions are well-defined and equivalent. 

It is tempting to attempt to explicitly characterize the distribution
$D^+_f(x)$ in the same way we characterized $D^-_f(x)$. However, no
such tractable characterization is possible. In fact, it is NP-hard to
even evaluate $f^+(x)$, even when $f$ is a graph cut function.

\begin{theorem}(\cite{vondrak_ipco, vondrak_thesis})
  It is $NP$-hard to evaluate $f^+(x)$ for an arbitrary submodular
  $f:2^X \to \RR$ and $x\in [0,1]^X$. This is true even when $f$ is a
  graph cut function.
\end{theorem}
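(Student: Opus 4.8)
The plan is to reduce from the maximum cut problem. Fix a graph $G=(V,E)$ and let $f:2^V\to\ZZ$ be its cut function, which is submodular. I will argue that evaluating $f^+$ at the single, trivially describable point $x^*\in[0,1]^V$ with $x^*_i=1/2$ for all $i\in V$ amounts to computing the value of a maximum cut of $G$. Since the cut function of a graph is a legitimate instance of ``submodular $f$'', hardness for this restricted class a fortiori gives the theorem's claim for arbitrary submodular functions.

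The crux is the two-sided identity $f^+(x^*)=\max_{S\sse V}f(S)$. The upper bound $f^+(x)\le \max_{S}f(S)$ is immediate at \emph{every} point: by Definition \ref{def:concc2}, $f^+(x)=\Ex_{S\sim D^+_f(x)}[f(S)]$ is an average of values of $f$, hence at most $\max_S f(S)$. For the matching lower bound at $x^*$ I would exploit the symmetry of cut functions. Let $S^*$ attain $\max_S f(S)$ and let $D$ be the distribution placing mass $1/2$ on $S^*$ and mass $1/2$ on $V\sm S^*$. One checks directly that $D$ has marginals $x^*$, and since $f(V\sm S^*)=f(S^*)$ we get $\Ex_{S\sim D}[f(S)]=f(S^*)=\max_S f(S)$; because $f^+(x^*)$ lower-bounds (indeed equals the maximum of) $\Ex[f]$ over all distributions with marginals $x^*$, this yields $f^+(x^*)\ge \max_S f(S)$. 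Combining the two bounds, $f^+(x^*)=\max_S f(S)$, which for the cut function of $G$ is precisely the maximum cut value of $G$.

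It follows that any polynomial-time algorithm evaluating $f^+(x)$ — given $G$ and the point $x^*$ — would compute the max-cut value of $G$ exactly, contradicting the NP-hardness of (the decision version of) maximum cut. Hence evaluating the concave closure is NP-hard even for graph cut functions, and therefore for arbitrary submodular functions.

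There is no serious obstacle here: the argument works precisely because the concave closure is trivially sandwiched between an average of values of $f$ and $\max_S f(S)$, and because for a \emph{symmetric} function the uniform point $x^*=\tfrac12\vec{\bf 1}$ realizes that maximum exactly, so that ``evaluating $f^+$ at $x^*$'' literally \emph{is} the underlying combinatorial maximization. The only points requiring care are choosing the right reduction target (maximum cut) and the right query point, and confirming that $x^*$ has a polynomial-size description. If one additionally wanted hardness of \emph{approximating} $f^+$, the same reduction composed with known inapproximability results for maximum cut would suffice, but for the stated exact claim the above is enough.
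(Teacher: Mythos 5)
Your proposal is correct, and while it reduces from Max-Cut just as the paper does, the mechanism is genuinely different. The paper sets up the constrained concave program $\max\set{f^+(x) : 1 \leq \vec{\bf 1}\cdot x \leq n-1}$, argues its optimum is attained at an integer point, and then asserts that an evaluation oracle for $f^+$ lets one solve this program efficiently --- a step that is really a sketch, since optimizing a concave function from value queries alone requires some additional justification (approximate subgradients, ellipsoid, etc.). You instead prove the pointwise identity $f^+\bigl(\tfrac{1}{2}\vec{\bf 1}\bigr) = \max_{S \sse V} f(S)$: the upper bound because $f^+$ is an expectation of values of $f$, and the lower bound by exhibiting the distribution that places mass $1/2$ on each of $S^*$ and $V \sm S^*$, which has the right marginals and, by symmetry of the cut function, achieves expectation exactly $f(S^*)$. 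This buys you a cleaner and in fact slightly stronger statement: a \emph{single} evaluation of $f^+$ at one fixed, explicitly describable point already reveals the max-cut value, so no optimization over $x$ is needed and the glossed step in the paper's argument disappears. The one thing your argument leans on that the paper's does not is the symmetry $f(S) = f(V \sm S)$ of cut functions, which is what lets the uniform point $\tfrac{1}{2}\vec{\bf 1}$ realize the combinatorial maximum exactly; the paper's optimization-based route would adapt more readily to non-symmetric objectives. Both arguments are sound for the theorem as stated.
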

\begin{proof}
  The proof is by reduction from the NP-hard problem Max-Cut. In the
  max cut problem, we are given an undirected graph $G=(V,E)$, and the
  goal is to find a cut $(S,V\sm S)$ maximizing the number of edges
  crossing the cut. Let $f(S)$ be number of edges crossing the cut
  $(S,V\sm S)$. 

  We reduce finding the maximum non-trivial cut (with $S
  \neq \emptyset,V$) to the following convex optimization problem:
  Maximize $f^+(x)$ subject to $ 1 \leq \vec{\bf 1}\cdot x \leq n-1$. Indeed,
  it is clear that this is a relaxation of the max-cut problem. The
  optimum is attained at an integer point $x^*$, since without loss of
  generality the trivial sets ($\emptyset$ and $V$) will not be in the
  support of any optimum distribution. Therefore, if $f^+(x)$ can be
  evaluated in polynomial time for an arbitrary $x$, then this convex
  optimization problem can be solved efficiently. This completes the reduction.
\end{proof}

Stronger hardness results are possible. In fact, it is shown in
\cite{vondrak_thesis} that, even when $f$ is a monotone coverage
function and $k$ is an integer, the convex optimization problem
$\max\set{ f^+(x) : \vec{\bf 1} \cdot x \leq k}$ is APX-hard. More
generally, it is shown in \cite{vondrak_nonmonotone} that it is hard
to maximize general submodular functions in the value oracle model
(independently of $P \neq NP$) with an approximation factor better
than $1/2$.

In light of these difficulties, there is no hope of finding exact
polynomial time algorithms for maximizing submodular functions in most
interesting settings, using $f^+$ or otherwise. Therefore, we will
consider another extension of submodular functions that will prove
useful in attaining constant factor approximations for maximization
problems.

\subsection{The Multilinear Extension and Independent Distributions}
\label{sec:multilinear}
\subsubsection{Defining the Multilinear Extension}
Ideally, since concavity is intimately tied to maximization, we could
use the concave extension of a submodular function in relaxations of
maximization problems. However, unlike the convex closure, the concave
closure of a submodular function cannot be evaluated
efficiently. Moreover, since $f^+$ is the point-wise lowest concave
extension of $f$, any concave extension will have a non-trivial
integrality gap relative to most interesting polytopes, including even
the hypercube. In other words, any concave extension other than $f^+$
will not correspond to a distribution at every point of the domain
$[0,1]^X$; a property that has served us particularly well in
minimization problems.

In light of these limitations of concave extensions, we relax this
requirement and instead exhibit a simple extension that is
\emph{up-concave}: concave in all directions $\vec{u} \in \RR^n$ with
$u \succeq 0$ (or, equivalently $u \preceq 0$). Moreover, this
extension will correspond to a natural distribution at every point,
and therefore will have no integrality gap on the domain
$[0,1]^X$. Surprisingly, this extension will also have no integrality
gap over any matroid polytope. As we will see in Section
\ref{sec:maximizing}, it turns out that, under some additional
conditions, up-concave functions can be approximately maximized over a
large class of polytopes.

Without further a-do, we define the \emph{multilinear extension} $F$
of a set function $f$. First, we say a function $F:[0,1]^X \to \RR$ is
\emph{multi-linear} if it is linear in each variable $x_i$, when the
other variables $\set{x_j}_{j \neq i}$ are held fixed. It is easy to
see that multilinear functions from $\RR^X$ to $\RR$ form a vector
space. Moreover, a simple induction on dimension shows that a
multi-linear function is uniquely determined by its values on the
vertices of the hypercube. This allows us to define the multilinear
extension.
\begin{definition}\label{def:multilinear1}
  Fix set function $f: 2^X \to \RR$. The \emph{multilinear extension}
  $F:[0,1]^X \to \RR$ of $f$ is the unique multilinear function
  agreeing with $f$ on the vertices of the hypercube.
\end{definition}

As with the Lov\'{a}sz extension, the multilinear extension corresponds to
a natural distribution at each point $x \in [0,1]^X$, and moreover
this distribution has marginals $x$. This distribution becomes
apparent if we express $F$ in terms of a simple basis, with each
element of the basis corresponding to a vertex of the hypercube. For a
set $S \sse X$, we define the multilinear basis function $M_S$ as follows
\[ M_S(x) = \prod_{i \in S} x_i\cdot \prod_{i \neq
  S}(1-x_i)\]
Since a multilinear function is uniquely determined by its values on
the hypercube, it is easy to check that  any multilinear function can be
written as a linear combination of the basis functions $\set{M_S}_{S
  \sse X}$, with $f(S)$ as the coefficient of $M_S$.

\[F(x) = \sum_{S \sse X} f(S) \cdot M_S(x) = \sum_{S \sse X} f(S)
\cdot \prod_{i \in S} x_i\cdot \prod_{i \neq S}(1-x_i) \]

Inspecting the above expression, we notice that $F(x)$ corresponds to
a simple distribution with marginals at $x$. Let $D^i(x)$ be the
distribution on $2^X$ that simply picks each element $v \in X$
independently with probability $x(v)$. It is clear that $Pr_{D^i(x)}
[S] = M_S(x)$. Therefore, it is clear that $F(x)$ is simply the
expected value of $f$ over draws from $D^i(x)$. This gives the
following equivalent definition of the multilinear extension.
\begin{definition}\label{def:multilinear2}
  Fix a set function $f:2^X \to \RR$. For each $x \in [0,1]^x$, let
  $D^i(x)$ be the distribution on $2^X$ that picks each $v \in X$
  independently with probability $x(v)$. The value of the multilinear
  extension $F:[0,1]^X \to \RR$ at $x$ can be defined as the expected
  value of $f$ over draws from $D^i(x)$. 
\begin{align}\label{eq:multilinear}
  F(x) = \Ex_{S \sim D^i(x)} f(S) = \sum_{S \sse X} f(S) \cdot
  \prod_{i \in S} x_i\cdot \prod_{i \neq S}(1-x_i)
\end{align}
\end{definition}

We note that, like the Lov\'{a}sz extension, the multilinear extension has
the property of being \emph{oblivious}: the distribution defining $F$
at $x$ does not depend on the set function $f$. The fact that, yet
again, an oblivious extension lends itself particularly well to
solving optimization problems is a remarkable, and arguably fundamental
phenomenon.

\subsubsection{Useful Properties of the Multilinear Extension}
In this section, we will develop some properties of the Multilinear
extension that will be useful for problems involving maximization of
submodular functions. The maximization problem we consider in Section
\ref{sec:maximizing} is that of maximizing a monotone, submodular function
$f:2^X \to \RR$ over independent sets of a matroid $M=(X,I)$.

First, we show that the multilinear relaxation of a monotone
set function is also monotone.
\begin{prop}(\cite{vondrak_stoc08})
  If $f:2^X \to \RR$ is a monotone set function, then its multilinear
  relaxation $F: [0,1]^X \to \RR$ is monotone. That is, whenever $x
  \preceq y$, we have $F(x) \leq F(y)$. 
\end{prop}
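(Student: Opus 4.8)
The plan is to exploit the distributional interpretation of the multilinear extension from Definition~\ref{def:multilinear2}, namely $F(x) = \Ex_{S \sim D^i(x)} f(S)$, where $D^i(x)$ picks each element of $X$ independently with the prescribed marginal. The key device is a \emph{monotone coupling}: given $x \preceq y$, I will construct a joint distribution on pairs $(S,T) \in 2^X \times 2^X$ whose first marginal is $D^i(x)$, whose second marginal is $D^i(y)$, and for which $S \sse T$ with probability one.

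Concretely, draw independent uniform random variables $U_v \in [0,1]$, one for each $v \in X$, and set $S = \set{v : U_v \leq x(v)}$ and $T = \set{v : U_v \leq y(v)}$. Since $x(v) \leq y(v)$ for every $v$, we have $S \sse T$ in every realization; and since the $U_v$ are independent, $S$ has exactly the law $D^i(x)$ and $T$ has exactly the law $D^i(y)$ (not merely stochastic domination). Now apply monotonicity of $f$ pointwise: for each realization, $S \sse T$ gives $f(S) \leq f(T)$. Taking expectations over the coupling, and using that the expectation of a function of $S$ alone (resp.\ $T$ alone) depends only on its marginal law, yields $F(x) = \Ex f(S) \leq \Ex f(T) = F(y)$, as desired.

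A purely algebraic alternative is to show that every partial derivative of $F$ is nonnegative. Because $F$ is multilinear, $\frac{\partial F}{\partial x_i}(x)$ does not depend on $x_i$ and equals $\left. F(x) \right|_{x_i = 1} - \left. F(x) \right|_{x_i = 0} = \Ex_{S \sim D^i(x),\, i \notin S}\bigl[ f(S \cup \set{i}) - f(S) \bigr]$, which is nonnegative by monotonicity of $f$. Hence $F$ is nondecreasing along each coordinate axis, and moving from $x$ to $y$ one coordinate at a time gives $F(x) \leq F(y)$.

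I do not anticipate a genuine obstacle here; the only point needing a moment's care is confirming that the coupling reproduces the independent product measures $D^i(x)$ and $D^i(y)$ exactly, which is immediate from the independence of the $U_v$. I would present the coupling argument as the main proof, since it makes transparent that monotonicity of $f$ is precisely the hypothesis being used, and it generalizes cleanly to other probabilistic extensions built on marginal-preserving distributions.
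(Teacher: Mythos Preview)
Your main proof is essentially identical to the paper's: both construct the same monotone coupling via independent uniforms $U_v$ (the paper calls them $R(v)$), observe $S \sse T$ pointwise, and apply monotonicity of $f$. Your algebraic alternative via nonnegative partial derivatives is a valid extra argument that the paper does not present here, though it computes the same derivative formula later when proving up-concavity.
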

\begin{proof}
  By Definition \ref{def:multilinear2}, it suffices to show that
  distribution $D^i(y)$ draws a \emph{pointwise} larger set than
  $D^i(x)$. We couple draws $S_x \sim D^i(x)$ and $ S_y \sim D^i(y)$
  in the obvious way: for each $v \in X$ we independently draw a
  random variable $R(v)$ from the uniform distribution on $[0,1]$. If
  $0 \leq R(v) \leq x(v)$, then we let $v \in S_x$, otherwise $v
  \notin S_x$. Similarly, $v \in S_y$ if and only if $0 \leq R(v) \leq
  y(v)$. Since we have $x(v) \leq y(v)$ for every $v \in X$, it is
  clear that, under this coupling, $S_x \sse S_y$ pointwise. By
  monotonicity of $f$, this implies that $F(x) \leq F(y)$.
\end{proof}

Next, we will show that the multilinear extension $F$ of a submodular
$f$ is \emph{up-concave}: concave when restricted to any direction $u
\succeq 0$ (equivalently $u \preceq 0$). In other words, it must be
that for any $x \in [0,1]^X$ and $u \succeq 0$, the expression
$F(x + t u)$ is concave as a function of $t \in \RR$ over the domain
of $F$. This is consistent with the diminishing-marginal-returns
interpretation of submodularity and the independent distribution
interpretation of $F$: weakly increasing the marginal probability of
drawing each item can only result in items getting packed together
into larger and larger sets (in a point-wise sense), and hence
yielding diminishing marginal increases in the expected value of
$f$. This intuition can be made precise by carefully coupling draws
from $D^i(x)$ and $D^i(y)$ for some $x \preceq y$, and considering the
marginal increases from transitioning from $D^i(x)$ and $D^i(y)$ to $D^i(x + \delta
u)$ and $D^i(y + \delta u)$ respectively (for some $u \succeq 0$ and
arbitrarility small $\delta >0$). However, we will instead use tools from
linear algebra to get a cleaner proof.

Using elementary linear algebra, up-concavity can be re-stated as a
condition on the hessian matrix $\grad^2 F(x)$ of $F$ at $x$. The
matrix $\grad^2 F (x)$ is the symmetric matrix with rows and columns
indexed by $X$, and the $(i,j)$'th entry corresponding to the second
partial derivative $\frac{\partial^2 F}{\partial x_i \partial x_j}
(x)$. Up-concavity is then the condition that

\[ u^T (\grad^2 F (x)) u \leq 0 \mbox{ for all $x \in [0,1]^X$ and $u
  \succeq 0$} \]

Since $F$ is multilinear, the diagonal entries of $\grad^2 F(x)$ are
always $0$.  Therefore, by considering $\set{\vec{\bf
    1}_{\set{i,j}}}_{i,j \in X}$ as choices for $u$, we conclude that
$F$ is up-concave if and only if all the second partial derivatives
$\frac{\partial^2 F}{\partial x_i \partial x_j} (x)$ are
non-positive. Indeed, this is consistent with submodularity and the
independent-distribution interpretation of $F$: increasing the
probability of including $i$ only results in sets that are point-wise
larger, and therefore these sets would benefit less by inclusion of
$j$ as well.

\begin{prop}(\cite{vondrak_stoc08})
  If $f:2^X \to \RR$ is submodular, then its multi-linear relaxation
  $F:[0,1]^X \to \RR$ is up-concave.
\end{prop}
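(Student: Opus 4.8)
The plan is to use the reduction, established in the discussion preceding the statement, of up-concavity to the condition that every mixed second partial derivative $\frac{\partial^2 F}{\partial x_i \partial x_j}(x)$ is non-positive for all $x \in [0,1]^X$ and all distinct $i,j \in X$ (recall that the diagonal entries of $\grad^2 F(x)$ vanish because $F$ is multilinear). So I would fix distinct $i,j$, compute this mixed partial, and show it is at most $0$ everywhere.

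First I would freeze all coordinates other than $x_i$ and $x_j$. Since $F$ is linear in each variable separately, its restriction to the two free coordinates has the form $F = a + b\,x_i + c\,x_j + d\,x_i x_j$, where $a,b,c,d$ depend only on $\set{x_k}_{k \neq i,j}$; hence $\frac{\partial^2 F}{\partial x_i \partial x_j} = d$. To identify $d$, I would invoke Definition \ref{def:multilinear2}: draw a random $T \sse X \sm \set{i,j}$ from the independent distribution with the frozen marginals $\set{x_k}_{k\neq i,j}$, then add $i$ with probability $x_i$ and $j$ with probability $x_j$, all independently. Conditioning on $T$, the contribution to $F$ is $(1-x_i)(1-x_j) f(T) + x_i(1-x_j) f(T\union\set{i}) + (1-x_i)x_j f(T\union\set{j}) + x_i x_j f(T\union\set{i,j})$, so reading off the coefficient of $x_i x_j$ gives
\[ d = \Ex_T\!\left[\, f(T\union\set{i,j}) - f(T\union\set{i}) - f(T\union\set{j}) + f(T) \,\right]. \]

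The key step is then immediate: for each fixed $T$, submodularity in the diminishing-returns form (Definition \ref{def:submod2}) applied with $A = T$, $B = T\union\set{i}$, and element $j$ yields $f(T\union\set{i,j}) - f(T\union\set{i}) \leq f(T\union\set{j}) - f(T)$, i.e.\ the bracketed quantity is $\leq 0$; taking expectation over $T$ preserves this, so $d \leq 0$. As $x$, $i$, and $j$ were arbitrary, $F$ is up-concave.

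I do not expect a genuine obstacle here — the content is entirely in getting the order of operations right (reduce to mixed partials, then use multilinearity to collapse each mixed partial to the single coefficient $d$, and only then apply submodularity). As an alternative worth mentioning, one may observe that $\frac{\partial^2 F}{\partial x_i \partial x_j}$ is itself a multilinear function of $\set{x_k}_{k\neq i,j}$ whose value at the vertex $\vec{\bf 1}_T$ is the discrete second difference $f(T\union\set{i,j}) - f(T\union\set{i}) - f(T\union\set{j}) + f(T) \leq 0$; since a multilinear function that is non-positive at every vertex of a cube is non-positive throughout (it is an average of its vertex values under the independent distribution), the claim follows again.
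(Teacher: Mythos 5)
Your proposal is correct and follows essentially the same route as the paper: reduce up-concavity to non-positivity of the mixed second partials, identify $\frac{\partial^2 F}{\partial x_i \partial x_j}$ as an expectation of the discrete second difference $f(T\cup\set{i,j}) - f(T\cup\set{i}) - f(T\cup\set{j}) + f(T)$ over the independent distribution on the remaining coordinates, and conclude via submodularity. The only cosmetic differences are that you invoke the diminishing-returns form (Definition \ref{def:submod2}) where the paper cites Definition \ref{def:submod1}, and your explicit conditioning on $T \sse X \sm \set{i,j}$ is, if anything, slightly cleaner than the paper's notation.
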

\begin{proof}
  By the discussion above it suffices to show that, for each $x \in
  [0,1]^X$, we have that $\frac{\partial^2 F}{\partial x_i \partial
    x_j} (x) \leq 0$. Fixing $x$, we take the derivative of $F$ with respect
  to $x_i$ to get:
\[ \frac{\partial F}{\partial x_i} = \frac{\partial}{\partial x_i}
\Ex_{S \sim D^i(x)} f(S)= \Ex_{S \sim D^i(x)} [f(S \union \set{i})
- f(S)] \]

The equality above follows immediately from the independent
distribution interpretation of $F$, by conditioning on all events $j
\in S$ for $j \neq i$ and considering the expectation of $f$ as a
function of the marginal probability of $i$. Using linearity of
expectation and taking the derivative again in the same way with
respect to $j$, we get

\[ \frac{\partial^2 F}{\partial x_i \partial x_j} = \Ex_{S \sim
  D^i(x)} [f(S \union \set{i,j}) -f(S \union i)
- f(S \union j) + f(S)] \]
Using Definition \ref{def:submod1}, we get that this quantity is
non-positive, as needed.
\end{proof}

It is clear that, since the value of $F$ at any point corresponds to
the expectation of $f$ at a distribution on $2^X$, that $F$ has no
integrality gap relative to the hypercube $[0,1]^X$. Since we will
consider constrained maximization problems, it would be useful if this
held for interesting subsets of the hypercube. It turns out that, for
a submodular function $f$, a useful property that we term
\emph{cross-convexity} yields precisely such a guarantee relative to
all matroid polytopes. Cross convexity means that trading off two
elements $i$ and $j$ gives a convex function, or increasing marginal
returns.

\begin{definition}
  We say a function $F: [0,1]^X \to \RR$ is \emph{cross-convex} if,
  for any $i\neq j$, the function $F^x_{i,j}(\epsilon) := F(x
  +\epsilon (e_i -e_j))$ is convex as a function of $\epsilon\in \RR$.
\end{definition}
Cross-convexity is consistent with submodularity and the independent
distribution interpretation of the multilinear relaxation. Consider
independent distribution $D^i(x)$, and the associated expectation of
$f$. It is an easy exercise to see that the probability of
``collision'' of $i$ and $j$ -- that is, the probability that both are
drawn by the indepenent distribution -- is a \emph{concave} function
of $\epsilon$. Since ``collision'' corresponds to diminishing marginal
returns, or a \emph{decrease} in the expected value of $f$, this means
that the expectation of $f$ is \emph{convex} in $\epsilon$. We make
this precise in the proposition below.
\begin{prop}(\cite{vondrak_ipco})
  When $f:2^X \to \RR$ is submodular, its multilinear extension $F$ is cross-convex.
\end{prop}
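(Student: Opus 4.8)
The plan is to reduce cross-convexity directly to the non-positivity of the mixed partial derivative $\frac{\partial^2 F}{\partial x_i \partial x_j}$, which was already established in the preceding proposition (submodularity of $f$ implies up-concavity of $F$). Fix $x \in [0,1]^X$ and distinct $i,j \in X$, and set $v = e_i - e_j$, so that $F^x_{i,j}(\epsilon) = F(x + \epsilon v)$ is defined precisely for $\epsilon$ in the closed interval of reals for which $x + \epsilon v \in [0,1]^X$; convexity only needs to be checked on this interval. So it suffices to show $\frac{d^2}{d\epsilon^2} F^x_{i,j}(\epsilon) \geq 0$ there.

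By the chain rule, $\frac{d^2}{d\epsilon^2} F^x_{i,j}(\epsilon) = v^T \grad^2 F(x + \epsilon v)\, v$. Expanding with $v = e_i - e_j$ and using symmetry of the Hessian, this equals $\frac{\partial^2 F}{\partial x_i^2} - 2\frac{\partial^2 F}{\partial x_i \partial x_j} + \frac{\partial^2 F}{\partial x_j^2}$, all evaluated at $x + \epsilon v$. Here is the only point that requires any care: since $F$ is multilinear, the pure second derivatives $\frac{\partial^2 F}{\partial x_i^2}$ and $\frac{\partial^2 F}{\partial x_j^2}$ vanish identically, so $\frac{d^2}{d\epsilon^2} F^x_{i,j}(\epsilon) = -2\frac{\partial^2 F}{\partial x_i \partial x_j}(x + \epsilon v)$. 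By the previous proposition this is non-negative, and we are done. In other words, cross-convexity collapses to the already-proven fact that the mixed partials are non-positive, and there is no substantial obstacle.

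For a self-contained argument matching the distributional intuition sketched before the statement, I would instead condition the expectation defining $F$ on the restriction $T$ of the drawn set to $X \sm \set{i,j}$. Writing $a = x_i + \epsilon$ and $b = x_j - \epsilon$, the conditional contribution is $(1-a)(1-b)f(T) + a(1-b)f(T\union\set{i}) + (1-a)b\, f(T\union\set{j}) + ab\, f(T\union\set{i,j})$; since $a+b$ does not depend on $\epsilon$, a one-line computation shows the coefficient of $\epsilon^2$ in this quadratic is $f(T\union\set{i}) + f(T\union\set{j}) - f(T) - f(T\union\set{i,j})$, which is non-negative by Definition \ref{def:submod1} applied to $A = T\union\set{i}$, $B = T\union\set{j}$. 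Each conditional contribution is therefore a convex (indeed degree-$\le 2$) function of $\epsilon$, and $F^x_{i,j}$ is their expectation over $T$, hence convex.
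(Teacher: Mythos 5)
Your proposal is correct, and it actually contains two complete proofs. The second one --- conditioning on the restriction $T$ of the random set to $X \sm \set{i,j}$, expanding the four conditional terms, and reading off the coefficient of $\epsilon^2$ as $f(T\union\set{i}) + f(T\union\set{j}) - f(T) - f(T\union\set{i,j}) \geq 0$ --- is exactly the paper's argument, with $T$ playing the role of the paper's $\hat{S}$. Your first argument is a genuinely different and shorter route: since $F$ is a polynomial, $\frac{d^2}{d\epsilon^2}F(x+\epsilon(e_i-e_j)) = (e_i-e_j)^T\grad^2F(x+\epsilon(e_i-e_j))(e_i-e_j) = -2\,\frac{\partial^2 F}{\partial x_i\partial x_j}$, because multilinearity kills the diagonal Hessian entries, and the non-positivity of the mixed partials was already proved in the up-concavity proposition. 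The paper has all the ingredients for this reduction (it explicitly notes the vanishing diagonal and characterizes up-concavity by the sign of the mixed partials) but chooses to reprove cross-convexity from scratch via the distributional expansion, keeping the ``collision of $i$ and $j$'' intuition in view. What your Hessian route buys is economy and the clean observation that up-concavity and cross-convexity are two faces of the single fact $\frac{\partial^2 F}{\partial x_i\partial x_j}\leq 0$; what the direct expansion buys is a self-contained argument that does not lean on the previous proposition and exhibits $F^x_{i,j}$ explicitly as an expectation of convex quadratics. Your handling of the domain (convexity checked on the interval where $x+\epsilon(e_i-e_j)$ stays in the cube) is also the right amount of care.
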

\begin{proof}
  Fix $x$ and $i \neq j$. We can write $F^x_{i,j}(\epsilon)$ as:
\begin{align*}
   F^x_{i,j}(\epsilon) &= \Ex_{S \sim D^i(x+ \epsilon (e_i - e_j))}
  f(S)
\end{align*}
Consider the random variable $\hat{S}$: the set of elements other than
$i$ and $j$ that are drawn from $D^i(x + \epsilon (e_i - e_j))$. We
have
\begin{align*}
  &= \Ex_{\hat{S}} [  (x_i + \epsilon)(x_j - \epsilon) f(\hat{S} \union \set{i,j})\\
   &+ (x_i + \epsilon)(1- x_j + \epsilon) f(\hat{S} \union \set{i})\\
   &+ (1-x_i - \epsilon)( x_j - \epsilon) f(\hat{S} \union \set{j})\\
   &+ (1-x_i - \epsilon)( 1- x_j + \epsilon) f(\hat{S}) ]
\end{align*}

Observe that the coefficient of $\epsilon^2$ in the above expression
is $f(\hat{S} \union \set{i}) + f( \hat{S} \union \set{j}) - f( \hat{S}
\union \set{i,j}) - f(\hat{S})$. By Definition \ref{def:submod1}, this
is nonnegative, which yields convexity in $\epsilon$ as needed.
\end{proof}

Consider any $x \in [0,1]^X$ and any fractional $x_i,x_j$. We can
trade off items $i$ and $j$, in the sense defined above, until one of
them is integral. Cross convexity implies that the maximum point of
this tradeoff lies at the extremes. Therefore, repeating this process
as long as there are fractional variables, we can arrive at an integer
point $x' \in \set{0,1}^X$ such that $F(x') \geq F(x)$. When the set
of feasible solutions is constrained to a proper subset of the
hypercube, however, this may result in an infeasible
$x'$. Nevertheless, for well-structured matroid polytopes, a careful
rounding process maintains feasibility without decreasing the
objective value. This is known as \emph{Pipage rounding}, and will be
presented in Section \ref{sec:maximizing}.



\section{Algorithmic Implications}
In this section, we consider minimization and maximization problems
for submodular functions. The algorithms we consider will make heavy
use of the extensions described in Section \ref{sec:extensions}. 

The algorithms we consider take as input a set $X$ with $|X|=n$, and a
rational number $B$. For the maximization problem we consider,
additional constraints are given as input; we defer details to Section
\ref{sec:maximizing}. The function $f:2^X \to \QQ$ to be minimized or
maximized is assumed to satisfy $\frac{\max_{S \sse X} f(S)}{\min_{S
    \sse X} f(S)} \leq B$. The algorithms we present will operate in
the value oracle model. We require that the algorithms run in time
polynomial in $n$ and $\log B$, and therefore also make a polynomial
number of queries to the value oracle.

\subsection{Minimizing Submodular Functions}
\label{sec:minimizing}
Proposition \ref{discrete_to_convex} allows us to reduce discrete
optimization to continuous optimization. Namely, we reduce
minimization of $f$ to minimization of its convex closure $f^-$. When
$f^-$ can be evaluated efficiently, this yields an efficient algorithm
for minimizing $f$ using the standard techniques of convex
optimization.

When $f$ is submodular, $f^-= \L_f$. It is clear from Section
\ref{sec:lovasz_ext} that the Lov\'{a}sz extension can be evaluated
efficiently: we can explicitly construct the distribution $D^\L(x)$,
which has support of size at most $n+1$, and then explicitly compute
the expected value of $f$ over draws from $D^\L(x)$. Therefore, we can
compute the minimum of a submodular function by finding the minimum of
its Lov\'{a}sz extension.

\begin{theorem}(\cite{gls,lovasz_submodularity}) There exists an algorithm for
  minimizing a submodular function $f:2^X \to \RR$ in the value query
  model, running in time polynomial in $n$ and $\log B$.
\end{theorem}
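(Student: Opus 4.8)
The plan is to turn the equivalence of discrete and continuous minimization (Proposition~\ref{discrete_to_convex}) into an algorithm by exploiting that, for submodular $f$, the convex closure $f^-$ equals the Lov\'{a}sz extension $\L_f$ (Theorem~\ref{thm:lovasz_is_convex}), which is both efficiently evaluable and efficiently ``differentiable''. By Proposition~\ref{discrete_to_convex} we have $\min_{S \sse X} f(S) = \min_{x \in [0,1]^X} \L_f(x)$, and from any point $x$ that attains (or nearly attains) the continuous minimum, the chain $S_0 \sse S_1 \sse \cdots \sse S_n$ of Definition~\ref{def:lovasz_ext} supplies a candidate minimizing set. So the task reduces to (i) minimizing the convex function $\L_f$ over the cube, and (ii) reading a set off the answer.

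For step (i), the key observation I would establish is that $\L_f$ admits an efficiently computable subgradient. Given $x$ with coordinates sorted $x(v_1) \geq \cdots \geq x(v_n)$ and chain $S_i = \{v_1,\dots,v_i\}$, the vector $g \in \RR^X$ with $g(v_i) = f(S_i) - f(S_{i-1})$ is a subgradient of $\L_f$ at $x$: indeed $\L_f$ is the pointwise maximum, over the $n!$ orderings of the coordinates, of the linear functions obtained in this way (a fact equivalent to submodularity, and consistent with $f^- = \L_f$ together with the linear program defining $f^-$), so the linear function attached to the ordering that sorts $x$ supports $\L_f$ at $x$. Computing $g$ costs $n+1$ value queries. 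Feeding this subgradient oracle into a standard convex-minimization routine over the box $[0,1]^X$ — the ellipsoid method, or constrained subgradient descent — yields, in a number of iterations polynomial in $n$, $\log B$, and $\log(1/\delta)$ (with $\poly(n)$ arithmetic plus $O(n)$ value queries per iteration), a point $x^*$ with $\L_f(x^*) \leq \min_x \L_f(x) + \delta$; the Lipschitz constant of $\L_f$ is controlled by the range of $f$, hence by $B$.

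For step (ii) and the main difficulty: writing $x^* = \sum_i \lambda_i \one_{S_i}$ as in Definition~\ref{def:lovasz_ext}, the $\lambda_i$ form a probability vector, so $\min_{0 \le i \le n} f(S_i) \leq \sum_i \lambda_i f(S_i) = \L_f(x^*) \leq \min_S f(S) + \delta$; the algorithm returns the set among $S_0,\dots,S_n$ of least $f$-value, which is thus a $\delta$-approximate minimizer. The genuine obstacle — the part that is the heart of the Gr\"otschel--Lov\'asz--Schrijver argument rather than a routine manipulation — is upgrading this to an \emph{exact} minimizer within a polynomial time bound. Since $\L_f$ is piecewise linear with minimum attained at a vertex of $[0,1]^X$, two distinct attainable values of $f$ differ by at least $2^{-\poly}$ in the bit-complexity of $f$ (this is where one uses that value-oracle inputs in this model admit $\poly(\log B)$-size descriptions, so that $\log B$ controls the gap); taking $\delta$ below that threshold costs only a polynomial factor in the iteration count and forces the returned set to be a true minimizer. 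The residual technical bookkeeping — numerical stability of the ellipsoid iterates, polynomial bounds on the bit-length of all intermediate quantities, and the accounting that makes the overall running time $\poly(n,\log B)$ — I would cite from \cite{gls} rather than reproduce here.
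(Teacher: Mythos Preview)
Your proposal is correct and follows essentially the same approach the paper sketches: reduce minimizing $f$ to minimizing $\L_f = f^-$ over the cube (Proposition~\ref{discrete_to_convex} and Theorem~\ref{thm:lovasz_is_convex}), observe that $\L_f$ can be evaluated with $n+1$ oracle calls, and invoke black-box convex minimization. The paper actually gives no proof beyond the preceding paragraph, so your additional details---the explicit subgradient $g(v_i) = f(S_i) - f(S_{i-1})$, the rounding step of returning the best $S_i$ in the chain, and the discussion of approximate-to-exact via a granularity argument deferred to \cite{gls}---are all appropriate elaborations of what the paper leaves implicit.
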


\subsection{Maximizing Monotone Submodular Functions Subject to a Matroid Constraint}
\label{sec:maximizing}

In this section, we consider the probelm of maximizing a nonnegative,
monotone, submodular function $f:2^X \to \RR$ over independent sets of
a matroid $M=(X,I)$. We assume $f$ is given by a value oracle as
usual, and $M$ is given by an independence oracle: An oracle that
answers queries of the form: is $S \in I$? It is well known that much
can be accomplished in this independence oracle model. In particular,
we can use submodular function minimization, presented in Section
\ref{sec:minimizing}, to get a separation oracle for the matroid
polytope $P(M)$.

First, we begin where we left off in Section
\ref{sec:multilinear}. Namely, we will show that we can indeed reduce
maximization of $f$ over $M$ to maximization of the multilinear
relaxation $F$ over the polytope $P(M)$. In particular, we show that
$F$ has no integrality gap relative to $P(M)$, and the rounding can be
done in polynomial time. This is known as \emph{Pipage Rounding}.

\begin{lemma}(\cite{vondrak_ipco}) \label{lem:matroid_nogap}
  Fix a submodular function $f: 2^X \to \RR$ and its multilinear
  relaxation $F$. Fix a matroid $M=(X,I)$. For every point $x \in
  P(M)$, there exists an integer point $x' \in P(M)$ such that $F(x')
  \geq F(x)$. Therefore, $F$ has no integrality gap relative to
  $P(M)$. Moreover, starting with $x$, we can construct $x'$ in
  polynomial time.
\end{lemma}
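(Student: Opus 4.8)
The plan is to use cross-convexity of $F$ (established in the previous section) together with the combinatorial structure of the matroid polytope $P(M)$ to iteratively eliminate fractional coordinates without decreasing $F$ and without leaving $P(M)$. The key observation is the one already hinted at in the excerpt: if $x_i$ and $x_j$ are both fractional, then $F^x_{i,j}(\epsilon) = F(x + \epsilon(e_i - e_j))$ is convex in $\epsilon$, so on any interval $[\epsilon^-, \epsilon^+]$ the maximum is attained at one of the two endpoints; hence we may replace $x$ by the endpoint without decreasing $F$. The issue is choosing the direction $e_i - e_j$ and the interval endpoints so that (a) we stay inside $P(M)$ and (b) we make genuine progress, i.e. the number of fractional coordinates strictly decreases (or, more precisely, some potential measuring fractionality strictly decreases).

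First I would recall the vertex structure of $P(M)$: its vertices are exactly the indicator vectors $\vec{\bf 1}_S$ of independent sets, so $x \in P(M)$ is integral iff it is a vertex. Given a non-integral $x \in P(M)$, I would look at the minimal face of $P(M)$ containing $x$, equivalently the set of \emph{tight constraints} at $x$: those $S \sse X$ with $x(S) = r_M(S)$, together with the coordinates that are already at $0$ or $1$. A standard uncrossing argument for matroid polytopes shows that the tight sets form a lattice (closed under union and intersection, using submodularity of $r_M$ and the fact that $x$ satisfies all rank inequalities), and from this one deduces that if $x$ has two or more fractional coordinates, there exist two fractional coordinates $i,j$ and a direction $e_i - e_j$ along which one can move in both orientations while staying in $P(M)$ — specifically, one moves until either some coordinate hits $0$ or $1$, or some previously-slack rank constraint becomes tight. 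The precise statement I would establish is: for a non-integral $x$, either there is a single fractional coordinate (impossible, since $x(X) \le r_M(X)$ is an integer, so fractional coordinates come in a way that cannot be a single one on the whole ground set — more carefully, one restricts to a tight set), or one can find $i,j$ and an interval $[\epsilon^-,\epsilon^+]$ with $\epsilon^- < 0 < \epsilon^+$ such that $x + \epsilon(e_i-e_j) \in P(M)$ for all $\epsilon$ in the interval, and at each endpoint the number of integral-or-newly-tight constraints has strictly increased.

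Next I would invoke cross-convexity: $F(x + \epsilon(e_i - e_j))$ is convex on $[\epsilon^-, \epsilon^+]$, so $\max\{F(x+\epsilon^-(e_i-e_j)), F(x+\epsilon^+(e_i-e_j))\} \ge F(x)$. Replace $x$ by whichever endpoint achieves the max; this new point lies in $P(M)$, has $F$-value at least $F(x)$, and strictly fewer fractional coordinates (or strictly more tight constraints — one sets up a potential, e.g. $n$ minus the dimension of the minimal face, and argues it strictly decreases). Iterating at most $O(n)$ — or more conservatively $O(n^2)$ — times yields a vertex $x'$ of $P(M)$, i.e. an integer point, with $F(x') \ge F(x)$. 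Each iteration requires: identifying two suitable fractional coordinates and the feasible interval (which uses the separation oracle for $P(M)$, available via submodular minimization from Section \ref{sec:minimizing}, to compute how far one can move before a rank constraint becomes tight), and two evaluations of $F$; since $F$ can be estimated by sampling from $D^i(x)$, or evaluated exactly when $f$ has small support structure, this is polynomial. Finally, taking $x$ to be an optimal fractional point gives an integral point at least as good, so $F$ has no integrality gap relative to $P(M)$.

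The main obstacle is the combinatorial step: showing that for any non-integral $x \in P(M)$ one can always find a pair $(i,j)$ and a bidirectional feasible move along $e_i - e_j$ that strictly reduces fractionality. This is exactly the classical uncrossing lemma for matroid polytopes — the tight rank constraints at $x$ form a lattice family, and on the sublattice one argues by a counting/dimension argument that a fractional point on a face of dimension $\ge 1$ admits such a move along a $\pm(e_i - e_j)$ direction staying on that face until it hits a facet. Getting this argument clean, and correctly bounding the number of iterations, is where the real work lies; the cross-convexity half of the proof is essentially immediate given the earlier proposition.
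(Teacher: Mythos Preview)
Your proposal is correct and follows essentially the same pipage-rounding approach as the paper: cross-convexity of $F$ to pick the better endpoint of an $e_i-e_j$ swap, the lattice structure of tight rank constraints to guarantee a feasible bidirectional move, and a potential argument to bound the number of iterations by $O(n^2)$. One small slip: a single fractional coordinate is \emph{not} impossible (the constraint $x(X)\le r_M(X)$ is an inequality, not an equation); the paper handles this case directly via multilinearity---round the lone fractional $x_i$ to $0$ or $1$ in whichever direction does not decrease $F$ and remains feasible---and then reduces to the case where every fractional coordinate lies in some tight set, whereupon picking $i,j$ from a \emph{minimal} tight set gives the clean progress measure (the cardinality of the minimal tight set strictly drops whenever no coordinate becomes integral).
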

\begin{proof}
  Recall that the rank function $r_M$ of matroid $M$ is an integer
  valued, normalized, monotone, and submodular set function. Moreover,
  recall that the matroid polytope is as defined in
  \ref{sec:matroids}. In the ensuing discussion, we will
  assume that we can efficiently check whether $x \in P(M)$, and moreover we can
  find tight constraint when $x$ is on the boundary of $P(M)$. Both
  problems are solvable by submodular function minimization.

  By multilinearity, the proposition is trivial when there is only a
  single fractional variable. Moreover, by multilinearity we may
  assume without loss of generality that every fractional variable
  appears in at least one tight constraint of the matroid polytope.

  It follows from the submodularity of $r_M$ that the family of
  ``tight sets'', those sets $S \sse X$ with $x(S) = r_M(S)$, is
  closed under intersection and union. Therefore, we consider a
  \emph{minimal} tight set $T$ with fractional variables $x_i$ and
  $x_j$, and trade off $x_i$ and $x_j$ subject to not violating
  feasibility (i.e. not leaving the matroid polytope $P(M)$). Observe
  that, by cross-convexity, we can choose the extreme point of this
  tradeoff so that the value of $F$ does not decrease. Moreover, one
  of two types of progress is made: either an additional variable is
  made integral, or a new tight set $T'$ is created that includes
  exactly one of $i$ or $j$.  It remains to show that, repeating this
  process so long as there are fractional variables, the second type
  of progress can occur consecutively at most $n$ times. This would
  complete the proof, showing that after at most $n^2$ steps all
  variables are integral.

  Observe that, since $T$ was chosen to be minimal and the tight sets
  are closed under intersection, trading off $x_i$ and $x_j$ does not
  ``untighten'' any set. Therefore, this process can only grow the
  family of tight sets. For simplicity, we assume that at each step we
  choose $T$ to be a tight set of minimum cardinality. (This assumption
  can be easily removed by more careful accounting.) If no variable is
  made integral after trading off $x_i$ and $x_j$, then an additional
  tight set $T'$ is created that includes exactly one of $i$ or
  $j$. Since tight sets are closed under intersection, and tight sets
  are preserved, this implies that the cardinality of smallest tight
  set strictly decreases. Therefore, a variable must be made
  integral after at most $n$ iterations, completing the proof.
\end{proof}

Now, it remains to show that $F$ can be maximized approximately over
$P(M)$. In fact, something even more general is true, as shown by
Vondr{\'a}k in \cite{vondrak_stoc08}: Any nonnegative, monotone, up-concave
function can be approximately maximized over any solvable packing
polytope contained in the hypercube. Here, by \emph{packing polytope}
we mean a polytope $P \sse [0,1]^X$ that is \emph{down monotone}: If
$x,y \in [0,1]^X$ with $x \preceq y$ and $y \in P$, then $x\in P$. A
polytope $P$ is \emph{solvable} if we can maximize arbitrary linear
functions over $P$ in polynomial (in $n$) time, or equivalently if $P$
admits a polynomial time separation oracle.

\begin{lemma}(\cite{vondrak_stoc08}) \label{lem:maximize_upconcave}
  Fix a solvable packing polytope $P \sse [0,1]^X$. Fix a nonnegative,
  monotone, up-concave function $F: [0,1]^X \to \RR+$, that can be
  evaluated at an arbitrary point in polynomial time. Then the problem
  $\max\set{F(x) : x \in P}$ can be approximated to within a factor of
  $1-1/e$ in polynomial time.
\end{lemma}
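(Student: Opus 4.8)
The plan is to use the continuous greedy algorithm of Vondr\'ak. The idea is to trace out a curve $x(t)$ for $t \in [0,1]$, starting at $x(0) = \vec{\bf 1}_\emptyset = 0$, so that $x(1)$ is a good approximate maximizer. At each time $t$, the algorithm moves $x(t)$ in the direction of the point $v^*(t) \in P$ that maximizes the linear functional $v \mapsto \nabla F(x(t)) \cdot v$; since $P$ is solvable, $v^*(t)$ can be found in polynomial time (and since $F$ is monotone, $\nabla F \succeq 0$, so this direction is nonnegative). Formally, we set $\frac{d}{dt} x(t) = v^*(t)$. Because $P$ is a down-monotone packing polytope and each $v^*(t) \in P$, convexity of $P$ ensures $x(t) = \int_0^t v^*(s)\,ds$ stays in $P$ (it is an average of points in $P$, scaled by $t \le 1$). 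In practice one discretizes time into $1/\delta$ steps of length $\delta$ and moves by $\delta v^*(t)$; I would first present the continuous analysis and then remark that the discretization error is lower-order for $\delta$ polynomially small, and that $\nabla F$ can be estimated to sufficient accuracy by random sampling from $D^i(x)$ (standard concentration plus the bounded ratio $B$ controls the sampling error).

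The key inequality driving the analysis is that, along the trajectory, $\frac{d}{dt} F(x(t)) \geq OPT - F(x(t))$, where $OPT = \max\{F(x) : x \in P\}$. To see this, let $x^* \in P$ attain $OPT$. By the chain rule, $\frac{d}{dt} F(x(t)) = \nabla F(x(t)) \cdot v^*(t) \geq \nabla F(x(t)) \cdot x^*$ by the greedy choice of $v^*(t)$. Now I would invoke up-concavity together with monotonicity: consider the direction $u = (x^* - x(t)) \vee 0$ (the coordinatewise positive part), which satisfies $u \succeq 0$ and $x(t) + u \succeq x^*$. Monotonicity gives $F(x(t) + u) \geq F(x^*) = OPT$, and up-concavity of $F$ along the nonnegative direction $u$ gives $F(x(t) + u) - F(x(t)) \leq \nabla F(x(t)) \cdot u \leq \nabla F(x(t)) \cdot x^*$ (the last step using $\nabla F \succeq 0$ and $u \preceq x^*$, since $x(t)$ has nonnegative coordinates so the positive part of $x^* - x(t)$ is coordinatewise at most $x^*$). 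Chaining these gives $\frac{d}{dt} F(x(t)) \geq OPT - F(x(t))$, as claimed.

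Solving this differential inequality: writing $g(t) = F(x(t))$, we have $g'(t) \geq OPT - g(t)$ with $g(0) = f(\emptyset) \geq 0$, so $\frac{d}{dt}\bigl(e^t (g(t) - OPT)\bigr) \geq 0$ fails in the wrong direction; instead observe $\frac{d}{dt}(OPT - g(t)) \leq -(OPT - g(t))$, hence $OPT - g(t) \leq e^{-t}(OPT - g(0)) \leq e^{-t} \cdot OPT$. Setting $t = 1$ yields $F(x(1)) = g(1) \geq (1 - 1/e)\,OPT$. Finally $x(1) \in P$, so $x(1)$ is a feasible point of value at least $(1-1/e)\,OPT$; if an integral solution is desired when $P$ is a matroid polytope one appeals to Lemma \ref{lem:matroid_nogap}, but the statement as given only asks for a fractional point in $P$.

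The main obstacle I expect is the discretization and oracle-access argument rather than the clean continuous analysis above: one must show that taking steps of size $\delta$ rather than following the flow exactly loses only an $O(\delta \cdot \text{poly})$ additive term in each of the $1/\delta$ steps (using that $F$ is smooth with second derivatives bounded in terms of $B$ and $n$, so $F(x + \delta v^*) \geq F(x) + \delta \nabla F(x)\cdot v^* - O(\delta^2 n^2 B)$), and that estimating $\nabla F(x(t))$ by sampling introduces only a small error with high probability. Handling these carefully — and verifying that the accumulated error is $o(OPT)$ for a polynomially small $\delta$ — is the technical heart, though it is by now standard.
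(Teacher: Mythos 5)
Your proposal is correct and follows essentially the same route as the paper's proof sketch: the continuous greedy process starting at $\vec{0}$, the comparison direction $u = \max(x^*, x(t)) - x(t)$ (identical to the paper's $z'$), the chain of monotonicity plus up-concavity giving $\frac{d}{dt}F(x(t)) \geq OPT - F(x(t))$, and the resulting $1-1/e$ bound at $t=1$. Your treatment of the differential inequality and the discretization/sampling issues is, if anything, slightly more explicit than the paper's sketch.
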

\begin{proofsketch}
  We may assume without loss of generality that $F(\vec{0}) =0$. We
  let $OPT$ denote the maximum value of $F$ in $P$, and use $x^*$ to
  denote the point in $P$ attaining this optimal. Since $F$ is not
  concave in all directions, usual gradient descent techniques fail to
  provide any guarantees. Instead, we will show a modified
  gradient-descent-like technique that exploits up-concavity. We will
  consider a particle with starting position at $\vec{0} \in P$, and
  slowly move the particle in \emph{positive directions} only:
  directions $u \in \RR+^n$. This restriction is not without loss: any
  local descent algorithm that does not ``backtrack'' cannot guarantee
  finding the optimal solution. Nevertheless, by arguments analogous
  to those used for the greedy algorithm for max-k-cover, we can
  guarantee a $1-1/e$ approximation. We assume the motion of the
  particle is a continuous process, ignoring technical details related
  to discretizing this process so that it can be simulated in
  polynomial time.

  We use $x(t)$ to denote the position of the particle at time $t$. We
  interpret the position $x(t)$ of the particle as a convex
  combination of vertices $V_P$ of $P$, with vertex $v \in P$ having
  coefficient $\alpha_v(t)$
  \[ x(t) = \sum_{v \in V_P} \alpha_v(t) \cdot v\] Initially,
  $\alpha_{\vec{0}}(0)=1$, and $\alpha_v(0) = 0$ for each vertex $v
  \neq \vec{0}$. So long as $\alpha_{\vec{0}}(t) >0$, there is room
  for improvement in positive directions: we can replace $\vec{0}$ in
  the convex combination by some other vertex $z \succeq \vec{0}$. By
  monotonicity, this increases the value of $F$.  

  More concretely, for a small $d_t>0$, we let $\alpha_{\vec{0}}(t +
  d_t) = \alpha_{\vec{0}}(t) - d_t$, and $\alpha_z(t + d_t) =
  \alpha_z(t) + d_t$. We keep $\alpha_v (t+d_t) = \alpha_v(t)$ for all
  $v \neq \vec{0},z$. It is clear that this process must terminate
  when $t=1$, since at that point the vertex $\vec{0}$ is no longer
  represented in the convex combination. It remains to show how to
  choose $z$ at each step so that $F(x(1)) \geq (1-1/e) OPT$. By
  simple calculus, it suffices to show that $z$ can be chosen so that
  $\frac{d F(x(t))}{d_t} \geq OPT - F(x(t))$. In other words, that the
  rate of increase in the objective is proportional to the distance
  from the optimal. This is analogous to the analysis of many discrete
  greedy algorithms, such as that for max-k-cover.

  Fixing a time $t$, what if we choose $z$ so as to maximize the local
  gain? In other words, \[z = \argmax_{z \in P} \grad F(x(t)) \cdot
  z\] Finding such a $z$ reduces to maximizing a linear function over
  the matroid polytope, which can be accomplished in polynomial
  time. It remains to show that there exists a $z' \in P$ with $\grad
  F(x(t)) \cdot z' \geq OPT - F(x(t))$.

  Consider $z' = \max(x(t),x^*) - x(t)$, where the maximization is
  taken co-ordinate wise. We can interpret $z'$ as the ``set-wise
  difference'' betweeen $x^*$ and $x(t)$. Indeed, if $x^*$ and
  $x(t)$ were integral indicator vectors corresponding to subsets of
  $X$, then $z'$ is precisely the indicator vector of their set
  difference. The difference between any two sets in a
  downwards-closed set system is again in the set system. This analogy
  can be made precise to show that $z' \in P$ as follows: $z' \preceq
  x^* \in P$.

  We now show that $z'$ gives the desired marginal increase in
  objective. First, it is easy to see that $x(t) + z' \succeq x^*$,
  and therefore by monotonicity $F(x(t) + z') \geq F(x^*) =
  OPT$. Moreover, since $F$ is up-concave and $z' \succeq 0$, we get
  that $\grad F(x(t)) \cdot z' \geq OPT - F(x(t))$. This completes the
  proof.
\end{proofsketch}

When $F$ is the multilinear relaxation of $f$, we can evaluate $F$ to
arbitrary precision by a polynomial number of random samples
\cite{vondrak_stoc08}.  Combining Lemmas \ref{lem:matroid_nogap} and
\ref{lem:maximize_upconcave}, we get the Theorem. Technical details
that compensate for the loss of approximation due to sampling are
ommitted.

\begin{theorem} (\cite{vondrak_stoc08}) There exists an algorithm for maximizing a
  nonnegative, monotone, submodular function $f:2^X \to \RR$ given by
  a value oracle, over a matroid $M$ given by an independence oracle,
  that achieves an approximation ratio of $1-1/e$ and runs in time
  polynomial in $n$ and $\log B$.
\end{theorem}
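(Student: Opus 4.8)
The plan is to simply assemble the pieces developed in Sections~\ref{sec:multilinear} and~\ref{sec:maximizing}. Let $F$ be the multilinear extension of $f$. Since $f$ is nonnegative, monotone, and submodular, $F$ is nonnegative (each value $F(x) = \Ex_{S\sim D^i(x)}f(S)$ is an average of nonnegative numbers), monotone (by the monotonicity proposition), and up-concave (by the up-concavity proposition). Next I would observe that the matroid polytope $P(M)$ is a solvable packing polytope: down-monotonicity is immediate from the rank-function description $P(M) = \set{x \in \RR_+^n : x(S) \leq r_M(S) \text{ for all } S}$, since decreasing any coordinate preserves all constraints; and solvability follows because linear functions can be maximized over $P(M)$ by the matroid greedy algorithm using the independence oracle (equivalently, submodular function minimization on $r_M - x(\cdot)$ gives a separation oracle).

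With these observations in hand, I would apply Lemma~\ref{lem:maximize_upconcave} to obtain, in polynomial time, a fractional point $x \in P(M)$ with $F(x) \geq (1-1/e)\max\set{F(y) : y \in P(M)}$. Then I would apply Lemma~\ref{lem:matroid_nogap} (pipage rounding) to round $x$, in polynomial time, to an integer point $x' \in P(M)$ with $F(x') \geq F(x)$. Since the integer points of $P(M)$ are exactly the indicator vectors of independent sets of $M$, we have $x' = \vec{\bf 1}_S$ for some $S \in I$, and $F(x') = f(S)$ because $F$ is an extension of $f$. Finally, since $F$ restricted to integer points of $P(M)$ agrees with $f$ on $I$, we have $\max\set{F(y) : y\in P(M)} \geq \max\set{f(T) : T \in I}$, and chaining the inequalities yields $f(S) \geq (1-1/e)\max\set{f(T): T\in I}$, as required.

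The one genuine technical point — the main obstacle — is that $F$ is defined as an expectation over exponentially many subsets and cannot be evaluated exactly in the value-oracle model; it can only be estimated by drawing samples $S \sim D^i(x)$ and averaging $f(S)$. Here is where the boundedness assumption $\frac{\max_S f(S)}{\min_S f(S)} \leq B$ enters: because the range of $f$ is bounded (and, after the harmless normalization $f(\emptyset)=0$ used in Lemma~\ref{lem:maximize_upconcave}, controlled relative to $OPT$), a polynomial number of samples suffices, by a Chernoff/Hoeffding bound, to estimate $F(x)$ and its relevant partial derivatives to within a $(1\pm\eps)$ factor with high probability, for $\eps$ inverse-polynomial. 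One then reruns the arguments of Lemmas~\ref{lem:maximize_upconcave} and~\ref{lem:matroid_nogap} with these noisy estimates in place of exact evaluations; the accumulated error over the polynomially many steps of the continuous greedy process and the $O(n^2)$ pipage steps degrades the guarantee from $1-1/e$ to $1-1/e-o(1)$, which can be made smaller than any target slack (or absorbed by standard arguments). Since the algorithms of both lemmas run in time polynomial in $n$, and the sampling overhead is polynomial in $n$ and $\log B$, the overall running time is polynomial in $n$ and $\log B$. I would state these sampling details are routine and omit the full concentration computation, exactly as the surrounding text does.
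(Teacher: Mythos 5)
Your proposal is correct and follows exactly the paper's route: the paper also proves the theorem by combining Lemma \ref{lem:maximize_upconcave} (continuous greedy over the solvable packing polytope $P(M)$) with Lemma \ref{lem:matroid_nogap} (pipage rounding), and likewise notes that $F$ is evaluated to arbitrary precision by polynomially many samples while omitting the concentration details. Your write-up merely fills in the verifications (that $P(M)$ is a down-monotone solvable polytope, the chain of inequalities, the sampling error accounting) that the paper leaves implicit.
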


\subsection{New Result: Minimizing Nonnegative Symmetric Submodular
  Functions Subject to a Cardinality Constraint}\label{sec:minimizing_cardinality}
In this section, we consider the problem of minimizing a nonnegative
symmetric submodular function subject to a cardinality
constraint. First, we make the simple observation that, by
submodularity, the minimum of a symmetric submodular function is
always attained at $\emptyset$ and $X$. Therefore, as is usual when we
are working with symmetric submodular functions, we consider
minimization of $f$ over non-empty sets. Moreover, observe that, by
symmetry, an upperbound of $k$ on the cardinality is equivalent to a
lowerbound of $n-k$. Therefore, we assume without loss that are
minimizing $f$ over non-empty subsets of $X$ of cardinality at most
$k$.

Symmetric submodular functions often arise as cut-type functions. The
cut-function of an undirected graph is the canonical example. In this
context, our problem is equivalent to finding the minimum cut of the
graph that is sufficiently unbalanced: i.e. with smaller side having
cardinality at most $k$. We term this problem the
minimum-unbalanced-cut problem, and point out that it has obvious
implications for finding small ``communities'' in social networks.

A slight generalization of minimum unbalanced cut was studied in
\cite{kempe_unbalanced}. There, they consider the ``sourced'' version,
where a designated node $s$ is required to lie in the side of the cut
of interest (the side with at most $k$ nodes). They show that this
sourced-min-unbalanced-cut problem is NP-hard by reductions from
at-most-$k$-densest subgraph and max-clique. Moreover, they give an
algorithm achieving a bicriteria result parametrized by $\alpha >1$:
They find a cut of capacity at most $\alpha$ of the
optimal unbalanced cut, yet violating the cardinality constraint by a factor of
up to $\frac{\alpha}{\alpha -1}$. When $\alpha=2$, this gives a
$2$-approximation algorithm that overflows the constraint by a factor of
at most $2$. Their techniques do not directly yield a constant
approximation algorithm for the problem without violating the constraint.


\subsubsection{A 2-approximation algorithm}
In this section, we show a 2-approximation algorithm for minimizing a
nonnegative, symmetric submodular function subject to a cardinality
constraint. Without loss, we assume the constraint is an upper bound
of $k$ on the cardinality of the set. The algorithm operates in the
value query model, and runs in polynomial time. This result is
stronger than the result in \cite{kempe_unbalanced} in two ways: It
applies to general nonnegative symmetric submodular functions rather
than just graph cut functions, and it achieves a constant factor
approximation without violating the constraint. The reader may notice,
however, that this problem as-stated is not strictly more general than
the ``sourced'' problem considered in \cite{kempe_unbalanced}. We
leave open the question of whether a similar guarantee is possible for
the sourced problem.

\begin{algorithm}
\caption{2-approximation for minimizing nonnegative, symmetric, submodular f subject to
  cardinality constraint.}
\label{minsymm}
\begin{algorithmic}[1]
  \REQUIRE $f: 2^X \to \RR$ a nonnegative, symmetric, submodular
  function given by a value oracle. Integer $k$ such that $0 < k
  <n$.

  \ENSURE $Q$ minimizes $f$ over non-empty sets of size at most
  $k$

  \FORALL{$v_1 \in X$} 

  \STATE Find $x \in [0,1]^X$ minimizing Lov\'{a}sz extension $\L_f$
  subject to $x(v_1) = 1$ and ${ \vec{\bf 1}} \cdot x \leq k$.

  \STATE Construct the Lov\'{a}sz extension distribution $D^\L(x)$
  corresponding to point $x$.

  \IF{ There is $S$ in the support of $D^\L(x)$ with $|S| \leq k$ and
    $f(S) \leq 2 L_f(x) $} 
  \RETURN $S$
  \ELSE
  \STATE Find $S'$ in the support of $D^L(x)$  minimizing $f(S')$ subject to $|S'| \leq
  2k$ 

  \FORALL{$v_2 \in S' \sm \set{v_1}$} 
  \STATE Using submodular minimization, find $T$
  minimizing $f(T)$ subject to $v_1 \in T$ and $v_2 \notin T$.
  \IF{ $|T \intersect S'| \leq k$}
  \STATE $Q_{v_1,v_2} := T \intersect S'$
  \ELSE 
  \STATE $Q_{v_1,v_2} := \overline{T} \intersect S'$
  \ENDIF
  \ENDFOR
  \ENDIF
  \ENDFOR
  \STATE $Q := \argmin_{v_1,v_2} f(Q_{v_1,v_2})$
  \RETURN $Q$
\end{algorithmic}
\end{algorithm}

We will now argue that Algorithm \ref{minsymm} runs in polynomial
time. Step 2 can completed in polynomial time by standard convex optimization
techniques. For step 3, the polynomial-time construction in Section
\ref{sec:lovasz_ext} computes an explicit representation of
$D^\L(x)$. Moreover, from Section \ref{sec:lovasz_ext} we know
that $D^\L(x)$ has a support of size at most $n+1$, and thus steps 4
and 7 can
be completed in polynomial time. It is then easy to see that the
entire algorithm terminates in polynomial time.

Next, we argue correctness by nondeterministically stepping through
the algorithm. Let $S^*$ denote the optimal solution to the problem,
with $f(S^*)= OPT$. First, assume the algorithm guesses some $v_1
\in S^*$. Since $\L_f$ is an extension of $f$ and $S^*$ has cardinality
at most $k$, step 2 computes $x$ with $\L_f(x) \leq OPT$. Moreover, we
know from Section \ref{sec:lovasz_ext} that $\L_f(x)$ is the
expected value of $f$ over draws from $D^\L(x)$.

If $S$ with $|S| \leq k$ and $f(S) \leq 2 L_f(x)$ is found in step 4,
then we terminate correctly with a $2$-approximation. Otherwise, we
can show that step 7 finds $S'$ with $f(S') \leq OPT$.

\begin{lemma}\label{S'_good}
  Either there exists $S$ in the support of $D^\L(x)$ with $|S| \leq
  k$ and $f(S) \leq 2 L_f(x)$, or there exists $S'$ in the support of
  $D^\L(x)$ with $|S'| \leq 2 k$, and $f(S') \leq L_f(x)$. 
\end{lemma}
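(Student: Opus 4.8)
The plan is to reason directly about the chain distribution $D^\L(x)$ together with the cardinality budget that step 2 forces into $x$. Enumerate the support of $D^\L(x)$ as sets $S_1,\dots,S_m$ with probabilities $\lambda_1,\dots,\lambda_m>0$ summing to $1$. Since $x=\sum_i\lambda_i\mathbf{1}_{S_i}$ by the definition of the Lov\'{a}sz extension, and step 2 imposes $\mathbf{1}\cdot x\le k$, the starting point is the identity $\Ex_{S\sim D^\L(x)}|S|=\sum_i\lambda_i|S_i|=\mathbf{1}\cdot x\le k$, together with $\L_f(x)=\sum_i\lambda_i f(S_i)=\Ex_{S\sim D^\L(x)}f(S)\ge 0$ by nonnegativity of $f$. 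One immediate consequence is that the family $A:=\{i:|S_i|\le k\}$ is nonempty, for otherwise every set in the support would have size at least $k+1$ and the expected size would exceed $k$.

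Now suppose the first alternative of the lemma fails, i.e.\ $f(S_i)>2\L_f(x)$ for every $i\in A$; I would show the second alternative must then hold. Partition the support into $A$ as above, $B:=\{i:k<|S_i|\le 2k\}$, and $C:=\{i:|S_i|>2k\}$, with total probabilities $a,b,c$, so $a+b+c=1$. First dispatch the degenerate case $\L_f(x)=0$: it would force $f(S_i)=0$ for all $i$, and then any $i\in A$ would satisfy the first alternative, a contradiction; so we may assume $\L_f(x)>0$. Two budget inequalities now fall out. Bounding $\L_f(x)$ from below by its $A$-part and using the failure of the first alternative gives $\L_f(x)\ge\sum_{i\in A}\lambda_i f(S_i)>2a\,\L_f(x)$, hence $a<1/2$. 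Bounding the expected size from below by its $B$- and $C$-parts (using $|S_i|\ge k+1$ on $B$ and $|S_i|\ge 2k+1$ on $C$) gives $k\ge (k+1)b+(2k+1)c=k(b+2c)+(b+c)$, and since $b+c=1-a>0$ this yields $b+2c<1$.

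The crux is combining these. From $a+b+c=1$ and $b+2c<1$ we get $a=1-b-c>c$, hence $2a+b=1+(a-c)>1$, and in particular $b>1-2a>0$. Finally, writing the $B$-part of $\L_f(x)$ as $\L_f(x)$ minus its $A$-part and its nonnegative $C$-part, and again using the failure of the first alternative, $\sum_{i\in B}\lambda_i f(S_i)<\L_f(x)-2a\,\L_f(x)=(1-2a)\L_f(x)$. Averaging over $B$ (legitimate since $b>0$) produces some $i\in B$ with $f(S_i)\le\frac{1}{b}\sum_{j\in B}\lambda_j f(S_j)<\frac{1-2a}{b}\L_f(x)<\L_f(x)$, where the last step uses $1-2a<b$. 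The set $S'=S_i$ lies in the support, has $|S'|\le 2k$, and satisfies $f(S')\le\L_f(x)$, which is the second alternative. I expect the only genuine obstacle to be spotting that "first alternative fails" (which quantitatively says $a<1/2$) has to be played off against the cardinality constraint (which says $b+2c<1$) to obtain exactly $2a+b>1$, the inequality that pushes the $B$-average below $\L_f(x)$; everything else is bookkeeping, and — perhaps surprisingly — neither the chain structure of $D^\L(x)$ nor the symmetry of $f$ enters this particular lemma.
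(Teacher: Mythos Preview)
Your proof is correct. The partition into $A$, $B$, $C$ by cardinality, the derivation of $a<1/2$ from the value budget and $b+2c<1$ from the size budget, and the combination $b>1-2a>0$ that forces the $B$-average below $\L_f(x)$ all go through; you are also right to isolate the $\L_f(x)=0$ case and right that neither the chain structure nor symmetry is used here.

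The paper's proof reaches the same conclusion by a shorter route. Rather than partition the support, it assumes both alternatives fail and observes that then every $R$ in the support satisfies the single pointwise inequality
\[
f(R) \;>\; \Bigl(2 - \tfrac{|R|}{k}\Bigr)\,\L_f(x),
\]
which simultaneously encodes ``$f(R)>2\L_f(x)$ when $|R|\le k$'', ``$f(R)>\L_f(x)$ when $k<|R|\le 2k$'', and ``$f(R)\ge 0$ when $|R|>2k$'' (using $\L_f(x)>0$, which follows as in your degenerate case). Taking expectations and using $\Ex|R|\le k$ immediately yields $\L_f(x) > \L_f(x)$, a contradiction. So the paper replaces your three-way case split and the algebra on $(a,b,c)$ by one linear-in-$|R|$ interpolant and a one-line averaging argument. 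What your approach buys is transparency: it makes explicit that the witness for the second alternative lives strictly in the middle band $k<|S'|\le 2k$ and in fact satisfies the strict inequality $f(S')<\L_f(x)$; the paper's argument is slicker but hides this structure inside the interpolating bound.
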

\begin{proof}
  Assume not. It is now easy to check that each set $R$ in the
  support of $D^\L(x)$ has \[f(R)>\left(2-\frac{|R|}{k}\right) \L_f(x)\] Taking
  expectations, we get that
  \[ \Ex_{R \sim D^\L(x)} f(R) > \left(2 - \Ex_{R \sim D^\L(x)}
  \frac{|R|}{k}\right) \L_f(x) \geq L_f(x)\]
The last inequality follows from the fact that the expected value of
$|R|$ is at most $k$, by definitoin of $D^\L(x)$. This is a
contradiction, since by definition the expectation of $f$ over draws
from $D^\L(x)$ is precisely $\L_f(x)$.
\end{proof}

Now, assuming no appropriate $S$ was found in step 4, we have $S'$ as
in the statement of Lemma \ref{S'_good} with $k< |S'| \leq 2k$ and
$v_1 \in S'$. Since $|S^*| \leq k$, we know that there exists $v_2 \in
S'$ such that $v_2 \notin S^*$. In particular, there exists a set
containing $v_1$ and not containing $v_2$ with value at most $OPT$.
Assume the algorithm guesses such a $v_2$. This immediately yields the
following Lemma.

\begin{lemma}\label{T_good}
  If $v_1 \in S^*$ and $v_2 \in S' \sm S^*$ then step 9 finds $T$ such
  that $f(T) \leq OPT$.
\end{lemma}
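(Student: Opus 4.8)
The plan is to note that, under the hypotheses of the lemma, the optimal set $S^*$ is itself one of the sets over which step 9 optimizes, so the minimizer it returns can only be at least as good. Concretely: by assumption $v_1 \in S^*$, and since $v_2 \in S' \sm S^*$ we have $v_2 \notin S^*$, so $S^*$ satisfies both of the constraints ``$v_1 \in T$'' and ``$v_2 \notin T$'' imposed in step 9. Since step 9 returns a set $T$ of minimum $f$-value subject to exactly these two constraints, we immediately get $f(T) \leq f(S^*) = OPT$, which is the claim.

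The only point I would pause to justify is that the minimization in step 9 is a bona fide, polynomial-time solvable submodular minimization problem. I would argue this by contracting $v_1$ and deleting $v_2$: minimizing $f(T)$ over $T$ with $v_1 \in T$ and $v_2 \notin T$ is the same as minimizing the set function $g(U) := f(U \union \set{v_1})$ over $U \sse X \sm \set{v_1,v_2}$, and a one-line check shows that $g$ inherits submodularity from $f$, so the algorithm of Section \ref{sec:minimizing} applies. This is also consistent with the running-time analysis already given for the algorithm.

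I do not expect any genuine obstacle here: the lemma is essentially immediate once the right feasible point is identified. All of the real work was done in the paragraph preceding the lemma statement, which produced an element $v_2 \in S' \sm S^*$ (using $|S^*| \leq k < |S'|$). Given such a $v_2$, the present lemma is just the observation that $S^*$ lies in the feasible region of the optimization carried out in step 9, together with the fact that step 9 genuinely computes a minimizer over that region.
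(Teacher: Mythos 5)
Your proof is correct and is exactly the argument the paper intends: $S^*$ is feasible for the minimization in step 9 (it contains $v_1$ and excludes $v_2$), so the returned minimizer satisfies $f(T) \leq f(S^*) = OPT$. The extra remark that the constrained minimization reduces to unconstrained submodular minimization of $g(U) = f(U \cup \{v_1\})$ over $X \setminus \{v_1, v_2\}$ is a welcome, correct justification that the paper leaves implicit.
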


Therefore, combining Lemmas \ref{S'_good} and \ref{T_good}, we get the
following from submodularity and nonnegativity:
\[f( T \intersect S') \leq  f(T \intersect S') + f(T \union S') \leq f(T)
+ f(S') \leq OPT + OPT = 2OPT\]

Moreover, we know by symmetry of $f$ that $f(\overline{T}) = f(T) \leq
OPT$. Therefore, by the same calculation we get $f( \overline{T}
\intersect S') \leq 2OPT$. Now, observe that $T \intersect S'$ and
$\overline{T} \intersect S'$ partition $S'$ into non-trivial subsets by
definition of $T$.  This gives that the smaller of the two,
$Q_{v_1,v_2}$, has cardinality between $1$ and $k$, and moreover
$f(Q_{v_1,v_2}) \leq 2OPT$. The algorithm tries all $v_1$ and $v_2$,
so this immediately yields the Theorem.

\begin{theorem}
  Algorithm \ref{minsymm} is a polynomial-time $2$-approximation
  algorithm for minimizing a nonnegative, symmetric, submodular
  function subject to a cardinality constraint in the value oracle model.
\end{theorem}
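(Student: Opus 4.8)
The plan is to assemble the running-time analysis and the correctness analysis that the preceding lemmas have set up. For the running time I would walk through Algorithm~\ref{minsymm} line by line: Step~2 is the minimization of the convex function $\L_f$ over a polytope cut out by two linear constraints, solvable in time polynomial in $n$ and $\log B$ by the convex-optimization machinery of Section~\ref{sec:minimizing} together with the fact that $\L_f$ is efficiently evaluable; Step~3 is the explicit construction of $D^\L(x)$ from Section~\ref{sec:lovasz_ext}, whose support has size at most $n+1$, so the scans in Steps~4 and~7 over that support cost only polynomially many value queries; Step~9 is a single submodular-function minimization (with one forced-in and one forced-out element, which reduces to ordinary submodular minimization); and the two nested \textbf{for} loops over $v_1\in X$ and $v_2\in S'$ add only an $O(n^2)$ multiplicative overhead. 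Hence the whole algorithm is polynomial in the value-oracle model.

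For correctness I would argue by ``nondeterministically'' fixing the favorable loop iterates and observing that the final \emph{argmin} can only do better. Let $S^*$ attain the optimum, $f(S^*)=OPT$, and consider the outer iteration with $v_1\in S^*$. Since $\L_f$ extends $f$, since $|S^*|\le k$, and since $v_1\in S^*$, the indicator $\vec{\bf 1}_{S^*}$ is feasible for the program in Step~2, so the computed $x$ has $\L_f(x)\le OPT$; recall $\L_f(x)=\Ex_{R\sim D^\L(x)} f(R)$. By Lemma~\ref{S'_good}, either Step~4 returns $S$ with $|S|\le k$ and $f(S)\le 2\L_f(x)\le 2OPT$, and we are done, or Step~7 finds $S'$ in the support of $D^\L(x)$ with $|S'|\le 2k$ and $f(S')\le \L_f(x)\le OPT$. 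In the latter case $|S'|>k$, because otherwise $S'$ (which satisfies $f(S')\le \L_f(x)\le 2\L_f(x)$) would have triggered the test in Step~4.

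In that case, since $|S^*|\le k<|S'|$, there is some $v_2\in S'\sm S^*$; take the inner iteration with this $v_2$. By Lemma~\ref{T_good}, Step~9 produces $T$ with $v_1\in T$, $v_2\notin T$, and $f(T)\le OPT$. Submodularity and nonnegativity then give $f(T\intersect S')\le f(T\intersect S')+f(T\union S')\le f(T)+f(S')\le 2OPT$, and by symmetry $f(\overline{T})=f(T)\le OPT$, so the identical computation yields $f(\overline{T}\intersect S')\le 2OPT$. Since $T$ and $\overline{T}$ partition $X$, the sets $T\intersect S'$ and $\overline{T}\intersect S'$ partition $S'$, and both are nonempty ($v_1\in T\intersect S'$, $v_2\in \overline{T}\intersect S'$); hence the piece $Q_{v_1,v_2}$ selected in Steps~10--13 is nonempty, has size at most $\lfloor |S'|/2\rfloor\le k$, and satisfies $f(Q_{v_1,v_2})\le 2OPT$. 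Because the final $\argmin$ ranges over all pairs $(v_1,v_2)$, the returned $Q$ is a nonempty set of size at most $k$ with $f(Q)\le 2OPT$.

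The only delicate point — the ``main obstacle'' — is the cardinality bookkeeping: one must verify that a bad element $v_2\in S'\sm S^*$ exists (this is exactly why $|S'|>k$ is needed, and why Lemma~\ref{S'_good} is stated to first attempt a cheap set of size $\le k$ before falling back to one of size $\le 2k$), and that the piece $Q_{v_1,v_2}$ is genuinely nonempty and of size at most $k$, which holds precisely because $T$ separates $v_1$ from $v_2$ and therefore splits $S'$ into two nonempty parts each at most half of $S'$. Everything else is the routine uncrossing inequality together with symmetry of $f$.
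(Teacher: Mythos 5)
Your proof is correct and follows essentially the same route as the paper: the same line-by-line running-time walkthrough, the same nondeterministic guessing of $v_1\in S^*$ and $v_2\in S'\setminus S^*$, and the same combination of Lemmas \ref{S'_good} and \ref{T_good} with the uncrossing inequality and symmetry of $f$. One small slip: $Q_{v_1,v_2}$ need not have size at most $\lfloor|S'|/2\rfloor$, since the algorithm takes $T\cap S'$ whenever $|T\cap S'|\le k$ rather than taking the smaller piece (and a two-part partition does not make \emph{each} part at most half); the bound $|Q_{v_1,v_2}|\le k$ instead follows directly from the if/else structure, because $|T\cap S'|>k$ together with $|S'|\le 2k$ forces $|\overline{T}\cap S'|=|S'|-|T\cap S'|<k$, while nonemptiness holds since $v_1\in T\cap S'$ (the constraint $x(v_1)=1$ puts $v_1$ in every support set of $D^\L(x)$, hence in $S'$) and $v_2\in\overline{T}\cap S'$.
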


\paragraph{Conclusion}
In this survey, we considered various continuous extensions of
submodular functions. We observed that those extensions yielding
algorithmic utility are often associated with natural, even
\emph{oblivious} distributions on the ground set. We presented a
unified treatment of two existing algorithmic results, one on minimization and
one on maximization, using this distributional lens. Moreover, we demonstrate the
power of this paradigm by obtaining a new result for constrained
minimization of submodular functions.

\paragraph{Acknowledgements}
The author would like to thank the qualifying exam committee:
Serge Plotkin, Tim Roughgarden, and Jan Vondr{\'a}k, for helpful
discussions. We single out Jan Vondr{\'a}k for his careful
guidance, and for pointing out an improvement in the approximation
ratio of the result in Section \ref{sec:minimizing_cardinality}.

{
\bibliography{bib}
\bibliographystyle{plain}
}

\end{document}